\newcommand{\er}{Erd\H{o}s-R\'enyi}
\newcommand{\defeq}{\vcentcolon=}
\newcommand{\eqdef}{=\vcentcolon}
\theoremstyle{definition} \newtheorem{definition}{Definition} \newtheorem{theorem}{Theorem}
\newtheorem{proposition}{Proposition} \newtheorem{lemma}{Lemma} \newtheorem{corollary}{Corollary}
\newtheorem{assumption}{Assumption}  \theoremstyle{remark}
\newtheorem*{remark}{Remark}
\def\csname bb#1\endcsname{\mathbb{#1}} \expandafter\def\csname
\def\csname bf#1\endcsname{\mathbf{#1}} }
\newcommand{\epsX}{\epsilon_X}
\newcommand{\epsY}{\epsilon_Y}
\DeclareMathOperator*{\argmax}{arg\,max}
\title{Convergence and Connectivity:  Dynamics of Multi-Agent Q-Learning in Random Networks}
\def\@maketitle{%
  \newpage
  \vspace*{2em}
  \begin{center}%
    \let\footnote\thanks
    {\Large \bfseries\@title \par}%
    \vskip 2em%

    {\small
      \begin{tabular}{@{}p{0.3\textwidth}p{0.36\textwidth}p{0.3\textwidth}@{}}
      \raggedright
      \textbf{Dan Leonte}\par
      KAUST\par
      Saudi Arabia\par
      \texttt{dan.leonte@kaust.edu.sa}
      &
      \raggedright
      \textbf{Aamal Hussain}\par
      Latent Technology\par
      London, UK\par
      \texttt{aamal@latent-technology.com}
      &
      \raggedright
      \textbf{Rapha\"el Huser}\par
      KAUST\par
      Saudi Arabia\par
      \texttt{raphael.huser@kaust.edu.sa}
    \end{tabular}
    \vskip 0.5em
    \begin{tabular}{@{}p{0.45\textwidth}p{0.45\textwidth}@{}}
      \raggedright
      \textbf{Francesco Belardinelli}\par
      Imperial College London\par
      London, UK\par
      \texttt{francesco.belardinelli@imperial.ac.uk}
      &
      \raggedright
      \textbf{Dario Paccagnan}\par
      Imperial College London\par
      London, UK\par
      \texttt{d.paccagnan@imperial.ac.uk}
    \end{tabular}
    }%
  \end{center}%
  \par
  \vskip 1.5em
}
\begin{document}

\maketitle

\begin{abstract}
  Beyond specific settings, many multi-agent learning algorithms fail to converge to an equilibrium solution, instead displaying complex, non-stationary behaviours such as recurrent or chaotic orbits. In fact, recent literature suggests that such complex behaviours are likely to occur when the number of agents increases.
In this paper, we study Q-learning dynamics in network polymatrix normal-form games where the network structure is drawn from classical random graph models.  In particular, we focus on the Erdős-Rényi model, which is used to analyze connectivity in distributed systems, and the Stochastic Block model, which generalizes the above by accounting for community structures that naturally arise in multi-agent systems. 
In each setting, we establish sufficient conditions under which the agents' joint strategies converge to a unique equilibrium. We investigate how this condition depends on the exploration rates, payoff matrices and, crucially, the probabilities of interaction between network agents. 
We validate our theoretical findings through numerical simulations and demonstrate that convergence can be reliably achieved in many-agent systems, provided interactions in the network are controlled.
\end{abstract}


\section{Introduction}
\label{sec::Introduction}
The development of algorithms for multi-agent 
learning has produced a number of successes in
recent years, solving challenging problems in load-balancing
\citep{krichene:routing,southwell:graphical-congestion}, energy management
\citep{khaled:energy-ga} and game playing
\citep{moravcik:poker,brown:poker,foerster:smac,tuyls:stratego}.
Also thanks to these successes, the game-theoretic foundations of
learning in the face of many agents remain a thriving area of research. As the number of agents grows in these systems, it is critical to understand if their learning algorithms are guaranteed to converge to an equilibrium solution, such as a Nash Equilibrium.


 Unfortunately, previous work suggests that non-convergent behaviour becomes the norm as the number of players increases. A strong example is \citep{piliouras:impossibility}, which 
introduced a game in which no independent learning
dynamics converges to a Nash Equilibrium. 
Further studies have shown that chaotic dynamics may
occur even in two-player finite-action games \citep{sato:rps,galla:complex}. Crucially, 
\cite{sanders:chaos} found that, as the number of players 
grows, non-convergent
behaviour becomes the norm. 

Whilst at first glance these results suggest an insurmountable barrier towards strong
convergence guarantees, a key missing factor is an in-depth analysis of the \emph{interactions} between agents. In particular, both \cite{sanders:chaos} and \cite{hussain:aamas} assume
that the payoff to any given agent is dependent on \emph{all} other agents in the environment. This assumption rarely holds in practice. Rather, agents often interact through an underlying
network that may represent communication constraints or spatial proximity. A practitioner or system designer has a certain degree of control over the network structure, e.g., in a robotic swarm, they can influence the network by adjusting the communication range of the robots, while in a sensor network, this can be done by controlling the density of the deployed sensors. This leads us to study the following research question:

\begin{center}
    \textit{How does network structure affect the convergence of learning as the number of agents increases?}
\end{center}

Answering this question represents a key step towards guaranteeing the feasibility of learning with many
agents, so long as the network structure can be controlled. Indeed, a number of works have examined
learning in network games, uncovering the relationship between the network structure and properties
of the equilibrium as well as
designing algorithms that converge to an equilibrium
\citep{melo:network,parise:network,melo:qre,Shokri2020Leader-FollowerActiveness}. However, many of these algorithms have unrealistic requirements, such as full knowledge of the agents' payoff functions or their gradients.
Our goal instead is to
consider a widely applied reinforcement learning algorithm -- \emph{ Q-Learning} -- which requires only evaluations of the payoff functions, and study how the parameters of the
algorithm and the structure of the underlying network can be leveraged to yield convergence
guarantees, as well as how such guarantees scale with the number of agents.

\paragraph{Model.} We study agents who update their strategies via the Q-Learning
dynamics \citep{sato:qlearning}, a continuous-time counterpart to the well-studied
Q-Learning algorithm \citep{Watkins:qlearning,sutton:barto}. We focus our study on
\emph{network polymatrix games} \citep{janovskaja:polymatrix} in which agents select from a finite set of actions and interact with their neighbours on an underlying network. Network polymatrix games provide a setting in which the topology of interaction directly shapes both payoff structure and learning dynamics, allowing us to isolate and quantify the role of network structure in the convergence of multi-agent online learning. We study networks which are drawn from \emph{random graph} models:
the {\em \er} model \citep{erdos-renyi} and
the {\em Stochastic Block 
model} \citep{holland:sbm}. The \er\ model is widely used to study communication in distributed systems \citep{hong:oco-er,saurav:distributed-er}. The Stochastic
Block model is a natural extension of \er\ to model community structures which naturally arise in distributed systems \citep{prioutiere:clustering-sbm}. 


To prove our main results (Theorems \ref{thm::er-convergence} and \ref{thm::sbm-convergence}), we restrict to network polymatrix games in which each edge of the network is assigned the same bimatrix game. This means that agents are engaged in the same interaction with multiple opponents. This assumption is commonly
used to study the emergence of cooperation \citep{zhang:pd-graphs,chakraborty:chaos} and congestion
\citep{szabo:congestion} in multi-agent systems. Indeed, the assumption of shared payoffs has led to
a number of successes in studying large scale systems with many agents
\citep{perrin:fp,zida:population,perrin:omd,parise:graphon,hu:ql-mfg}. We adopt this assumption, but
also empirically validate our claims outside of this framework through the \emph{Conflict} Network game \citep{ewerhart:fp}, in which payoff matrices vary across edges.

\paragraph{Contributions.}
Our main contribution is to characterise the convergence properties of Q-Learning, depending on the exploration rates of the agents, the game payoffs, and the 
expected degree of the network. 
Specifically, our results provide sufficient conditions for the Q-learning dynamics to converge to an equilibrium with high probability as the number of agents $N$ increases, provided that the
network is drawn from one of the two above-mentioned models. 
We demonstrate that in low-degree networks, Q-learning converges even with low exploration rates and with a large number of agents. In fact, our experiments show convergence even with $200$ agents in low-degree networks, whereas \citet{sanders:chaos} reported failure to converge with low exploration rates for as few as five agents.

Our work establishes an explicit relationship between the convergence
of Q-Learning Dynamics and the expected node degree in the network, thus ensuring the feasibility of learning in many-agent games, so long as the expected node degree is controlled. Our results further ensure the uniqueness of the equilibrium, meaning that Q-learning converges to a single solution regardless of the initial conditions. To the best of our knowledge, this is the first work to study the asymptotic behaviour of learning dynamics in the context of network polymatrix games with random graph models, and to derive the relation between convergence and the expected node degree in the network. 
For reasons of space, the proofs of all results appear in the Appendix. 
%
%
\subsection{Related Work}
Our work focuses on independent online learning in network polymatrix games. Network games are
well-studied in the setting of zero-sum games \citep{cai:minimax}, which model strictly
competitive systems. In such cases, it is known that the continuous-time counterparts of popular
algorithms such as Fictitious Play \citep{ewerhart:fp} and Q-Learning \citep{piliouras:zerosum}
converge to an equilibrium. By contrast, \cite{piliouras:hamiltonian} showed that the \emph{Replicator
Dynamics} \citep{smith:replicator}, a continuous-time model of the Multiplicative Weights Update
algorithm \citep{hazan:mwu}, does not converge to the Nash Equilibrium. Similarly, \citet{shapley:twoperson} showed the non-convergence of Fictitious Play in a two-person non-zero-sum game. Indeed, \citet{hart:uncoupled} showed that \emph{no} learning dynamic that is uncoupled, in that the strategy of a player is independent of the payoff functions of other players, can converge to a Nash Equilibrium in all normal-form games. \citet{piliouras:impossibility} extended this result to show the impossibility of convergence of a wide class of learning dynamics to approximate Nash-Equilibria in normal-form games.

Network games have also been studied to understand the properties, in particular the uniqueness, of
the equilibrium \citep{bramoulle:networks,parise:network,melo:qre}. In many cases, the
literature appeals to the study of \emph{monotone games} \citep{paccagnan2018nash} which subsumes
zero-sum network games \citep{akin:zero-sum}.
The formalism of monotone games has been applied to design algorithms that provably converge to Nash
Equilibria. However, many of these algorithms require that agents have full access to their payoff function
\citep{parise:network} or its gradient \citep{mertikopoulos:concave}. Monotone
games also share strong links with the idea of \emph{payoff perturbations} \citep{facchinei:VI} in
which a strongly convex penalty is imposed to agents' payoff functions to stabilise learning
\citep{abe:perturbed-md,sokota:qre,liu:regularisation}.

Our work departs from the above by considering \emph{Q-Learning Dynamics}, a foundational exploration-exploitation model central to reinforcement learning
\citep{albrecht:marl,tuyls:foundational-models}. Q-Learning Dynamics is also related to the replicator dynamics \citep{bloembergen:review} and to Follow-the-Regularised-Leader
\citep{abe:MFTRL}. Outside of specialised classes of games, Q-Learning Dynamics has been shown to exhibit chaotic orbits \citep{sato:rps}, a phenomenon which becomes more
prevalent as the number of players increases \citep{sanders:chaos}. Similar to our work,
\cite{hussain:aamas-two} study the convergence of Q-Learning in network polymatrix games. However, their work considered deterministic graphs and, as such, derived sufficient conditions for convergence in terms of norms of the network adjacency matrix. We instead consider Q-Learning in a stochastic setting, where stochasticity arises from the random network. In doing so, we derive a direct relationship between the expected node degree and the convergence of Q-Learning Dynamics. Our random-network analysis also allows for the study of certain network classes, e.g., the Stochastic Block Model, which are well-studied models of distributed systems that contain sub-communities. With the tools of random networks, we make inroads into understanding how 
inter-, or intra- community edge connections affect the resulting convergence of the learning algorithm. 

Other works in multi-agent learning and online optimisation have considered the random network setting, for example to determine the existence of pure Nash Equilibria \citep{daskalakis:random-games}, to study the performance of distributed online algorithms \citep{hong:oco-er}, or to examine the emergence of cooperative behaviors \citep{marsili:collaboration}. A parallel step in the study of network games was the introduction of \emph{graphon games} in \citep{parise:graphon,carmona:gmfg}. The authors consider network games in the limit of uncountably infinite agents and generalise the \er\ and Stochastic Block models. This also extends mean-field games \citep{lauriere:survey,hu:ql-mfg} by introducing heterogeneity amongst players through their edge connections. Subsequent works on graphon games largely focus on the analysis of equilibria \citep{caines:gmfg,carmona:lqr-gmfg} or design learning algorithms that converge in time-average to an equilibrium \citep{koeppl:approx-gmfg,zhang:regularised-gmfg,zhang:unknown-gmfg}. By contrast, our goal is to understand the \emph{last-iterate} behaviour of Q-Learning Dynamics and establish probabilistic bounds to guarantee convergence in games with finitely many players. In particular, we study how the probability of edge connections in the network curtails non-convergent dynamics as the number of players increases. Thus, the restriction to uncountably infinite players is not suitable for our purposes, although the analysis of non-convergent learning algorithms in graphon games is an interesting direction for future work. 

\section{Background}
\label{sec::prelims}
\paragraph{Game Model.}\label{sec::game-model} 
We consider \emph{network polymatrix games}
\citep{janovskaja:polymatrix,cai:minimax},
which are defined as tuples $\clG = (\clN, \clE,
	(\clA_k)_{k \in \clN}, (A^{kl}, A^{lk})_{(k, l) \in \clE})$, where $\clN = \{1, 2, \ldots, N \}$
denotes a set of $N$ agents, indexed by $k$. The interactions between agents are modelled by a set $\clE \subset \clN \times \clN$ of
edges that defines an \emph{undirected network}. An alternative formulation of the underlying network is through an \emph{adjacency matrix} $G \in
\bbR^{N \times N}$ in which $[G]_{kl}=1$ if $(k, l) \in \clE$, and $[G]_{kl}=0$ otherwise. The \emph{degree} of an agent $k \in \clN$ is the number of edges in $\clE$ that include $k$.

At each round, each agent $k$ selects an action $i \in \clA_k$, where $\clA_k$ is a finite set of
$n_k$ actions. We denote the \emph{strategy} of an agent as the
probability distribution over their actions. Next, we define the \emph{joint strategy} across all agents as the concatenation of all individual strategies $\bfx = (\bfx_1, \ldots, \bfx_N)$ and apply the shorthand $\bfx_{-k}$ to denote the strategies of all agents other than $k$.

The goal of each agent is to maximise a utility function $u_k$. In a network polymatrix game $\clG$, each edge is associated with the \emph{payoff matrices} $(A^{kl},
A^{lk})$, i.e., the payoff for each agent $k \in \mathcal{N}$ is  
\begin{equation*}
	u_k(\bfx_k, \bfx_{-k}) = \sum_{l : (k, l) \in \clE} \bfx_k^\top A^{kl} \bfx_l.
\end{equation*}
\paragraph{Solution Concepts.} \label{sec::solution-concepts}
We focus on two widely-studied solution concepts: the {\em Nash Equilibrium}
\citep{Nash1950EquilibriumGames} and {\em Quantal Response Equilibrium} \citep{mckelvey:qre}. To
define these concepts we first introduce the \emph{reward} to agent $k \in \clN$ for playing action $i
\in \clA_k$ as $r_{ki}(\bfx_{-k}) = \sum_{l : (k, l) \in \clE} \sum_{j \in \clA_l} [A^{kl}]_{ij} x_{lj}$.
\begin{definition}[NE]
	A joint strategy $\bfx^* \in \Delta(\clA)$ is a {\em Nash Equilibrium} if
	\begin{equation*}
		\bfx_k^* \in \argmax_{\bfx_k \in \Delta(\clA_k)} u_k(\bfx_k, \bfx_{-k}^*) \text{ for all } k \in \clN. 
	\end{equation*} 
\end{definition}
\begin{definition}[QRE]
	Let $T_1, \ldots, T_N > 0$. Then, a joint strategy $\bfx^* \in \Delta(\clA)$ is a {\em  Quantal
	Response Equilibrium} if 
	\begin{equation*}
		\bfx_k^* =  \frac{\exp\left(
			r_{ki}(\bfx_{-k}^*) / T_k
			\right)}{ \sum_{j \in \clA_k} \exp\left( r_{kj}(\bfx_{-k}^*) / T_k \right)} \text{ for all } k \in \clN.
	\end{equation*}
\end{definition}
	The QRE is a natural extension of the Nash Equilibrium that allows agents to play suboptimal
	actions with non-zero probability. This is crucial in online learning, where agents must
	\emph{explore} their strategy space. The parameter $T_k$ is therefore known as the
	\emph{exploration rate} of agent $k$. Notice that, by taking the limit $T_k \rightarrow 0$ for
	all $k$, the QRE converges to the Nash Equilibrium \citep{mckelvey:qre}.


	\paragraph{Q-Learning Dynamics.} \label{sec::q-learning}
	We now describe the Q-Learning algorithm \citep{Watkins:qlearning,sutton:barto} which aims to learn an optimal action-value function $Q : \clA \rightarrow \bbR$ that
	captures the expected reward of taking a given action. To do this, an agent must \emph{explore} their action space by playing possibly suboptimal actions in an attempt to discover optimal strategies. Simultaneously, the agent plays optimal actions based on their current Q-estimate. As such, the Q-learning algorithm is a foundational model to describe the behaviour of agents that must balance exploration and exploitation. We consider the multi-agent extension of
	Q-Learning \citep{schwartz:marl,schaefer:marl} in which each agent $k$ maintains an individual
	Q-value estimate $Q_{ki} \in \bbR$ for each of their actions $i \in \clA_k$. These are updated at each round $t$ via the update
	\begin{equation} \label{eqn::qld_update}
		Q_{ki}(t + 1) = (1 - \alpha_k) Q_{ki}(t) + \alpha_k r_{ki}(\bfx_{-k}(t)),
	\end{equation}
	where $\alpha_k \in (0, 1)$ is the learning rate of agent $k$. 
    
    In effect, $Q_{ki}$ gives a
	discounted history of the rewards received when action $i$ is played, with $1 - \alpha_k$ as the
	discount factor. Note that Q-values are updated by the rewards
	$r_{ki}(\bfx_{-k})$ that depend on the strategies of all other agents at time $t$. So, the
	reward for a single action can vary between rounds. This non-stationarity often leads
	to chaotic dynamics in multi-agent learning.
	
    Given the Q-values, each agent plays their actions according to the \emph{Boltzmann}
	distribution:
	\begin{equation} \label{eqn::boltzmann-selection}
		x_{ki}(t + 1) = \frac{\exp \left( Q_{ki}(t + 1) / T_k \right)}{\sum_{j \in \clA_k} \exp \left( Q_{kj}(t + 1) / T_k \right)},
	\end{equation}
	where $T_k \in (0, \infty)$ is the exploration rate of agent $k$. 
    
	\cite{tuyls:qlearning} and \cite{sato:qlearning} showed that a continuous-time approximation of the Q-Learning
	algorithm is given by a variation of the replicator dynamics \citep{smith:replicator,hofbauer:egd} that is called the \emph{Q-Learning Dynamics}
    
	{\small
	\begin{equation} \tag{QLD} \label{eqn::QLD}
		\frac{\dot{x}_{k i}}{x_{k i}}=r_{k i}\left(\mathbf{x}_{-k}\right)-\langle \mathbf{x}_k, r_k(\mathbf
		{x}) \rangle +T_k \sum_{j \in \clA_k} x_{k j} \ln \frac{x_{k j}}{x_{k i}}.
	\end{equation}
	}
where $\langle \cdot, \cdot \rangle$ denotes the scalar product.
\cite{piliouras:zerosum} proved that the fixed points of (\ref{eqn::QLD}) coincide with the QRE of the game.


\paragraph{Additional Notation.}
Given a square matrix $A \in \bbR^{N \times N}$, we denote its {\em spectral norm} as $\lVert A \rVert_2 = \sup_{x \in
\bbR^N \, : \, \lVert \bfx \rVert_2 = 1} \lVert Ax \rVert_2$. If $A$ is symmetric, all its eigenvalues $\lambda_1, \ldots, \lambda_N$ are real, and the spectral norm agrees with the {\em spectral radius}, which is defined as
$\rho(G) = \max\{\lambda_1,\ldots,\lambda_N\}.$
\section{Convergence of Q-Learning in Graphs}
\label{sec::erdos-renyi}

In this section, we show that the convergence of (\ref{eqn::QLD}) is closely related to the structure of the underlying graph. First, we present the problem setup: we specify the payoffs by the \emph{intensity of identical interests} framework \citep{hussain:aamas-two} and specialize to a certain class of network polymatrix games, which satisfy Assumption \ref{ass::bimatrix-network} below. Second, we establish in Lemma \ref{lem::qld-spectral-radius} a sufficient condition on the exploration rates $T_k$ such that (\ref{eqn::QLD}) converges to the \emph{unique} QRE. We adapt this result to the random network setup using the \er\ and Stochastic Block models. In both cases, we establish connections between (\ref{eqn::QLD}) and the expected degree of a node in the network.
\begin{definition}[Intensity of Identical Interests]
	Let $\clG = (\clN, \clE, (\clA_k)_{k \in \clN}, (A^{kl}, A^{lk})_{(k, l) \in \clE})$ be a
	network polymatrix game. Then the intensity of identical interests $\delta_I$ of $\clG$ is given
	by
	\begin{equation} \label{eqn::intensity-identical-interests}
		\delta_I = \max_{(k, l) \in \clE} \lVert A^{kl} + (A^{lk})^\top \rVert_2.
	\end{equation}
    \label{def:id_interests}
\end{definition}

The intensity of identical interests $\delta_I > 0$ measures the similarity between the payoffs of connected
agents across all edges. A canonical example is the \emph{pairwise zero-sum game} in which $A^{kl} = -(A^{lk})^\top$ for all $(k, l) \in \clE$. In this case, the intensity of identical	interests is zero.

We prove our main results under the following Assumption on network polymatrix games.

\begin{assumption} \label{ass::bimatrix-network}
	Each edge is assigned the same bimatrix game, i.e., $(A^{kl}, A^{lk}) = (A, B)$ $\forall (k, l) \in \clE$.
\end{assumption} 

Note that Assumption \ref{ass::bimatrix-network} does not require $A^{kl} = A^{lk}$, rather it requires that each edge is associated with the same \emph{pair} of matrices $(A, B)$. No assumption is made about which agent receives matrix $A$ and which receives matrix $B$. This is reflected in our experiments (Section \ref{sec::experiments}) in which payoff matrices are randomly assigned to agents on each edge. This aspect of the setting is discussed in detail in Appendix \ref{app::undirected}. In our experiments, we also validate that our claims outside of this Assumption by studying the Conflict Network Game \citep{ewerhart:fp}; see Section \ref{sec::experiments} for details.

Having specified our setting, we next determine a sufficient condition for the convergence of
(\ref{eqn::QLD}) in terms of the properties of the adjacency matrix $G$. All proofs are deferred to
Appendix \ref{app::proof1} and \ref{app::proof2}.

\begin{lemma}\label{lem::qld-spectral-radius}
	Let $\clG = (\clN, \clE, (\clA)_k, (A, B)_{(k, l) \in \clE})$ be a network polymatrix game that satisfies Assumption \ref{ass::bimatrix-network}, and let $G$ be the adjacency matrix associated with the edge-set $\clE$. If for all agents  $k \in \clN$, $T_k > \delta_I \cdot \rho(G)$, 
	then the QRE $\bfx^*$ of the game $\clG$ is unique. 
    Further, $\bfx^*$ is globally asymptotically stable under (\ref{eqn::QLD}).
\end{lemma}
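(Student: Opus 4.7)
The plan is to prove uniqueness and global asymptotic stability together, by constructing a strict Lyapunov function about any QRE and then using its strict descent to rule out a second one. Existence of a QRE $\bfx^*$ follows from Brouwer's theorem applied to the continuous softmax best-response map $\phi(\bfx) = (\mathrm{softmax}(r_k(\bfx_{-k})/T_k))_{k \in \clN}$ on the compact convex joint simplex; since (\ref{eqn::QLD}) preserves the interior of the simplex, I would work there throughout.

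As Lyapunov function I would take $V(\bfx) = \sum_{k \in \clN} \mathrm{KL}(\bfx_k^* \,\|\, \bfx_k)$. Rewriting (\ref{eqn::QLD}) as $\dot x_{ki}/x_{ki} = \tilde r_{ki}(\bfx) - \langle \bfx_k, \tilde r_k(\bfx)\rangle$ with $\tilde r_{ki}(\bfx) := r_{ki}(\bfx_{-k}) - T_k \ln x_{ki}$, and using the defining identity of a QRE (namely that $\tilde r_k(\bfx^*)$ is constant in the action index, so orthogonal to $\bfx_k - \bfx_k^*$), a direct computation along trajectories yields
\[
\dot V = \sum_{k} \langle \bfx_k - \bfx_k^*,\, r_k(\bfx_{-k}) - r_k(\bfx_{-k}^*)\rangle \;-\; \sum_{k} T_k\bigl[\mathrm{KL}(\bfx_k\|\bfx_k^*) + \mathrm{KL}(\bfx_k^*\|\bfx_k)\bigr],
\]
which I will refer to as the payoff term and the regularization term.

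The core of the argument is to bound these two contributions. Writing $\Delta_k = \bfx_k - \bfx_k^*$ and $r_k(\bfx_{-k}) = \sum_l G_{kl} A^{kl}\bfx_l$, the payoff term becomes a quadratic form $\Delta^\top \tilde M \Delta$ in the stacked vector, where $\tilde M$ is the block matrix with $(k,l)$-block $G_{kl} A^{kl}$. Only its symmetric part $\tilde M_s$ contributes; under Assumption \ref{ass::bimatrix-network} each such block equals $\tfrac{G_{kl}}{2}(A^{kl} + (A^{lk})^\top)$, whose operator norm is at most $\delta_I/2$ regardless of how the roles of $A$ and $B$ are locally assigned to the two endpoints of an edge. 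The standard block-matrix bound (operator norm dominated by the spectral radius of the matrix of block norms) then yields $|\text{payoff term}| \leq \tfrac{1}{2}\delta_I \rho(G)\sum_k\|\Delta_k\|_2^2$. For the regularization term, Pinsker's inequality gives $\mathrm{KL}(\bfx_k\|\bfx_k^*) + \mathrm{KL}(\bfx_k^*\|\bfx_k) \geq \|\bfx_k - \bfx_k^*\|_1^2 \geq \|\Delta_k\|_2^2$, so its contribution is at most $-\min_k T_k \sum_k\|\Delta_k\|_2^2$.

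Combining, $\dot V \leq (\tfrac{1}{2}\delta_I \rho(G) - \min_k T_k)\sum_k\|\Delta_k\|_2^2$, which under the hypothesis $T_k > \delta_I \rho(G)$ is strictly negative whenever $\bfx \neq \bfx^*$. LaSalle's invariance principle then delivers global asymptotic stability of $\bfx^*$, and uniqueness is immediate: any other QRE $\bfy^*$ would be a rest point of (\ref{eqn::QLD}), giving $\dot V(\bfy^*) = 0$ and contradicting the strict descent. The main technical hurdle is the block symmetrization: because the orientation of $A$ versus $B$ can flip across edges, $\tilde M$ is not a clean Kronecker product $G \otimes A$, and the key observation that rescues the argument is that symmetrization retains only the combination $A^{kl} + (A^{lk})^\top$, whose operator norm is precisely $\delta_I$ by Definition \ref{def:id_interests}; this is what allows the spectral radius of $G$ and the intensity of identical interests to combine into the clean threshold appearing in the statement.
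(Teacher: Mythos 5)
Your proof is correct, but it takes a genuinely different route from the paper. The paper works in the variational-inequality framework: it forms the pseudo-Jacobian of the regularised game, splits it into the entropy Hessian (strongly positive definite with constant $\min_k T_k$) plus the off-diagonal payoff blocks, bounds the latter via Weyl's inequality and a Kronecker-product factorisation over half-edges ($G = G_{k\to l} + G_{l\to k}$), concludes strong monotonicity of $\clG^H$, and then invokes two external results (uniqueness of the QRE of strongly monotone regularised games, and convergence of (\ref{eqn::QLD}) in monotone games) to finish. You instead unroll that last citation: you build the Kullback--Leibler Lyapunov function $V=\sum_k\mathrm{KL}(\bfx_k^*\|\bfx_k)$ directly, use the QRE first-order condition to reduce $\dot V$ to an interaction quadratic form plus a symmetrised-KL term, control the former by the block-norm bound $\lVert \tilde M_s\rVert_2\le \tfrac{1}{2}\delta_I\rho(G)$ and the latter by Pinsker. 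The underlying mechanism is identical (entropy regularisation dominating the interaction spectrum), but your version is self-contained, handles the $A$-versus-$B$ orientation issue without the half-edge Kronecker decomposition, and consequently avoids the paper's factor-of-two loss from bounding $\lVert G_{k\to l}\rVert_2+\lVert G_{l\to k}\rVert_2\le 2\lVert G\rVert_2$: you actually establish the conclusion under the weaker hypothesis $T_k>\tfrac{1}{2}\delta_I\rho(G)$, and your block-norm estimate uses only Definition~\ref{def:id_interests}, not Assumption~\ref{ass::bimatrix-network}. Two small points to tighten: since $\dot V$ is strictly negative definite you do not need LaSalle, a direct Lyapunov argument suffices; and for \emph{global} asymptotic stability on the open simplex you should note explicitly that $\bfx^*$ has full support, so $V$ blows up at the boundary and its sublevel sets are compact subsets of the interior, which is what turns strict descent into convergence from every interior initial condition.
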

This result has two components: (i) the QRE is unique, and (ii) it is asymptotically stable with respect to (\ref{eqn::QLD}). Together, these statements ensure that a unique equilibrium solution is learned.

\paragraph{Discussion.}
Recall from Section \ref{sec::solution-concepts} that lower exploration rates are preferred, so that the equilibrium solution remains close to a Nash Equilibrium. Lemma \ref{lem::qld-spectral-radius} provides two quantities, $\delta_I$ and $\rho(G)$ which can be chosen appropriately to guarantee the convergence of (\ref{eqn::QLD}).  A common approach to address this is to fix the game to a specific class. As an example, restricting to pairwise zero-sum games, where $\delta_I = 0$, ensures that (\ref{eqn::QLD}) converges as long as $T_k > 0$ for all $k \in \clN$, regardless of $\rho(G)$. Our approach, in contrast, focuses on controlling the spectral radius $\rho(G)$, while treating $\delta_I$ to be fixed a-priori. 
To achieve this, we explore how network sparsity -- a tunable parameter in decentralised systems such as robotic swarms or sensor networks -- impacts the convergence of (\ref{eqn::QLD}). Our analysis turns to \emph{random networks}, where network sparsity is directly parameterised through the probability of edge connections. 
By deriving probabilistic upper bounds on the (random) spectral radius term $\rho(G)$ in terms of these parameters, we can directly link the network's sparsity to the convergence or (\ref{eqn::QLD}), providing robust convergence guarantees that hold for broad classes of networks.



We specify the network structure by
the parameters of the model from which it is generated. In the \er\ model, this is the
probability $p$ that a pair of nodes is connected by an edge. In the  Stochastic Block model, it is the set of within-community edge probabilities $p_1,\ldots,p_C$ and between-community edge probability $q$. We stress that our theoretical bounds are not limited to these specific models: they apply to any random network model with independent entries (see Lemma~\ref{lemma_we_need}). 

\paragraph{\er.} We begin with the \er\ (ER) model \citep{erdos-renyi}, in which the graph is generated by independently sampling each edge with probability $p \in (0, 1)$. The adjacency matrix $G$ is then a random matrix with entries that are Bernoulli distributed with parameter $p$. Let the degree of a node $k$ be the number of edges $(k, l) \in
\clE$ that contain $k$. The expected node degree in a network drawn from the ER model is therefore $p(N - 1)$. 

\paragraph{Stochastic Block Model.}  Networks are drawn from the Stochastic Block (SB) model \citep{holland:sbm} by partitioning the nodes into $C$
communities. The probability of an edge between nodes $k$ and $l$ that are in the same community
$c$ is given by $p_c \in (0, 1)$, whereas the probability of an edge between nodes in different
communities is $q \in (0, 1)$. The adjacency matrix $G$ is then generated by sampling each
edge independently according to the Bernoulli distribution with probabilities $p_c$ and $q$. The expected node degree in community $c$ is $p_c (N_c - 1) + q(N - N_c)$, where $N_c$ is the number of nodes in community $c$. Note that the higher $p_c$ and $q$, the higher the expected node degree, and further that the ER model is a particular case of the SB model for a single
community, i.e., $N_c = N$.

We first place probabilistic bounds on the spectral radius of the adjacency matrices in the ER and SB models, which we then use in Theorems \ref{thm::er-convergence} and \ref{thm::sbm-convergence} to obtain the main convergence results. For simplicity, in the SB model case
, the
following results assume all communities are of equal size, specifically $N_c = N / C$ for all
$c$. However, the proof stays valid when the communities' sizes vary.
\begin{lemma}\label{lem::er-spectral-radius}
	Consider the symmetric adjacency matrix $G \in \bbR^{N \times N}$ of a network drawn from the ER model, i.e., $G$ has a zero main-diagonal and the off-diagonal entries $\{ [G]_{kl} \sim \texttt{Bernoulli}(p): 1 \leq k < l \leq N\}$ are independent, where $p > 0$. 
    For any $\epsilon > 0$, it holds with probability greater than $1 - \epsilon$, 
	\begin{equation} \label{eqn::er-cond}
    		\rho(G) \le \rho^{ER} \defeq  (N-1)p + \sqrt{2(N-1)\, p \, (1-p) \log{\frac{2N}{\epsilon}}  } + \frac{2}{3} \log{\frac{2N}{\epsilon}}.
	\end{equation}
\end{lemma}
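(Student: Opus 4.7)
}
The plan is to reduce the bound on $\rho(G)$ to a matrix concentration inequality applied to the \emph{centered} adjacency matrix. I would first decompose $G = \mathbb{E}[G] + Z$ with $Z \defeq G - \mathbb{E}[G]$. Since the diagonal of $G$ is zero and its off-diagonal entries are Bernoulli with parameter $p$, we have $\mathbb{E}[G] = p(J - I)$, where $J$ is the all-ones matrix. This matrix is symmetric with eigenvalues $(N-1)p$ (simple) and $-p$ (multiplicity $N-1$), so $\|\mathbb{E}[G]\|_2 = (N-1)p$, which is exactly the deterministic leading term in the claimed bound. Because $G$ is symmetric, $\rho(G) \le \|G\|_2$, so by the triangle inequality $\rho(G) \le (N-1)p + \|Z\|_2$, and the remaining task is to control $\|Z\|_2$.

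Next, I would express $Z$ as a sum of independent, bounded, mean-zero, symmetric random matrices, one per edge: write $Z = \sum_{1 \le k < l \le N} Y_{kl}$, where $Y_{kl}$ has entry $[G]_{kl} - p$ in positions $(k,l)$ and $(l,k)$ and zeros elsewhere. The random variable $[G]_{kl} - p$ equals $1-p$ w.p.\ $p$ and $-p$ w.p.\ $1-p$, so $\|Y_{kl}\|_2 \le \max(p,1-p) \le 1$. A direct calculation gives $\mathbb{E}[Y_{kl}^2]$ equal to a diagonal matrix with $p(1-p)$ in entries $(k,k)$ and $(l,l)$, so summing over pairs yields the variance proxy $v \defeq \|\sum_{k<l} \mathbb{E}[Y_{kl}^2]\|_2 = (N-1)p(1-p)$.

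With this setup in place, I would invoke the matrix Bernstein inequality for independent symmetric matrices, which produces
\[
\mathbb{P}\bigl(\|Z\|_2 \ge t\bigr) \le 2N \exp\!\left(-\frac{t^2/2}{v + t/3}\right), \qquad t > 0.
\]
Setting the right-hand side equal to $\epsilon$ and solving the quadratic inequality for $t$, followed by the elementary bound $\sqrt{a+b} \le \sqrt{a}+\sqrt{b}$, gives the stated $\sqrt{2(N-1)p(1-p)\log(2N/\epsilon)} + \tfrac{2}{3}\log(2N/\epsilon)$ upper bound on $\|Z\|_2$, which I would finally combine with $\rho(G) \le (N-1)p + \|Z\|_2$.

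The main obstacle is carefully verifying the variance computation: the symmetrization induces correlations between the two nonzero entries of each $Y_{kl}$, so one must check that $\mathbb{E}[Y_{kl}^2]$ is genuinely diagonal (with the off-diagonal cross-terms vanishing because each row of $Y_{kl}$ has a single nonzero entry). Everything else reduces to tracking constants when inverting the Bernstein tail, which is routine. This derivation also makes clear why the result generalizes to any random-network model with independent entries (as claimed in the passage preceding Lemma~\ref{lemma_we_need}): only the boundedness and per-entry variance were used, not the identical distribution of the edges.
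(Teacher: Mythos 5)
Your proposal is correct and follows essentially the same route as the paper: the decomposition $G=\mathbb{E}[G]+\tilde{G}$ with $\rho(\mathbb{E}[G])=(N-1)p$, the per-edge rank-two summands $[\tilde{G}]_{kl}(e_k e_l^\top + e_l e_k^\top)$ with $L=\max(p,1-p)\le 1$ and diagonal $\mathbb{E}[Y_{kl}^2]$ giving variance proxy $(N-1)p(1-p)$, matrix Bernstein, and the tail inversion via $\sqrt{a^2+b}\le a+\sqrt{b}$ all match the paper's Lemma~\ref{lemma_we_need} and Corollary~\ref{corollary_bern}. Your closing observation that only independence, boundedness, and per-entry variances are used is exactly how the paper organizes the argument to cover the SB model as well.
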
 
\begin{lemma}\label{lem::sbm-spectral-radius}
	Consider the symmetric adjacency matrix $G \in \bbR^{N \times N}$ of a network drawn from the SB model
	\[
		\begin{cases}
			[G]_{kl}  \sim \texttt{Bernoulli}(p_c), & \text{ if } k \text{ and } l \in c,                     \\
			[G]_{kl}  \sim \texttt{Bernoulli}(q),   & \text{ if } k \in c \text{ and } l \in c^\prime \neq c, \\
			[G]_{kk} = 0,                           & \text{ for } 1 \le k \le N,
		\end{cases}
	\]
	where $\{[G]_{kl} : 1 \leq k < l \leq N \}$ are independent. For any $\epsilon > 0$, it holds with probability greater than $1-\epsilon$
    \begin{align*}
        \rho(G) \leq \rho^{SB} \eqdef & (N-N/C)q + (N/C-1)p_{\text{max}}  + \\
     + & \sqrt{2 \left( (N-N/C)  q (1-q) + (N/C-1) \sigma_{\text{p,max}}^2 \right) \log{\frac{2N}{\epsilon}}} + \frac{2}{3} \log{\frac{2N}{\epsilon}}.
    \end{align*}
	where $p_{\max} = \max \{ p_1, \ldots, p_C \}$ and $\sigma_{p,\max} = \max \{\sqrt{p_1(1 - p_1)}, \ldots, \sqrt{p_C(1 - p_C)}\}$.
\end{lemma}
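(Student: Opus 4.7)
The plan is to apply a matrix concentration inequality to the centered adjacency matrix $G - \bbE[G]$ and combine it with an explicit bound on the spectral radius of $\bbE[G]$. Since $G$ is symmetric, its spectral radius coincides with its spectral norm, so the triangle inequality gives
\[
\rho(G) = \lVert G \rVert_2 \leq \lVert \bbE[G] \rVert_2 + \lVert G - \bbE[G] \rVert_2,
\]
and the task reduces to bounding each term separately. First I would bound $\lVert \bbE[G] \rVert_2$ by the maximum row sum of $\bbE[G]$, which upper bounds the spectral radius of any nonnegative symmetric matrix. A node $k$ in community $c$ has expected degree $(N/C-1)p_c + (N-N/C)q$, and maximising over $c$ yields $(N/C-1)p_{\max} + (N-N/C)q$, recovering the first line of $\rho^{SB}$.

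For the fluctuation term I would decompose $G - \bbE[G] = \sum_{k<l} X_{kl}$ with $X_{kl} = (G_{kl}-\bbE[G_{kl}])(e_k e_l^\top + e_l e_k^\top)$, a sum of independent, symmetric, mean-zero matrices with $\lVert X_{kl} \rVert_2 \leq 1$. A short calculation gives $(e_k e_l^\top + e_l e_k^\top)^2 = e_k e_k^\top + e_l e_l^\top$, so $\bbE[X_{kl}^2] = \mathrm{Var}(G_{kl})(e_k e_k^\top + e_l e_l^\top)$, and $\sum_{k<l} \bbE[X_{kl}^2]$ is the diagonal matrix whose $k$-th entry equals $\sum_{l \neq k} \mathrm{Var}(G_{kl})$. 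For $k$ in community $c$ this sum evaluates to $(N/C-1)p_c(1-p_c) + (N-N/C)q(1-q)$, whose maximum over $k$ is exactly the variance parameter $(N/C-1)\sigma_{p,\max}^2 + (N-N/C)q(1-q)$ that appears under the square root. Plugging these quantities together with the uniform bound $R = 1$ into the matrix Bernstein inequality (Lemma~\ref{lemma_we_need}) and inverting the tail bound at level $\epsilon$ produces the remaining two terms of $\rho^{SB}$.

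The main obstacle, relative to the ER case, is the bookkeeping of the matrix variance proxy $\lVert \sum_{k<l} \bbE[X_{kl}^2] \rVert_2$: one must verify that the block-inhomogeneous structure still yields a diagonal matrix, so that its spectral norm reduces to a maximum over nodes, mirroring how only the maximum row sum enters the bound on $\lVert \bbE[G] \rVert_2$. Once this reduction is in hand, the argument is a direct adaptation of the ER proof, with $p$ replaced edge-by-edge by the relevant within- or between-community probability. This same bookkeeping makes the authors' remark about unequal community sizes immediate, since one need only replace $N/C-1$ and $N-N/C$ by $N_{c(k)}-1$ and $N-N_{c(k)}$ inside the two maximisations.
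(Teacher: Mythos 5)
Your proposal is correct and follows essentially the same route as the paper: the same split $\rho(G)\le\rho(\bbE[G])+\rho(G-\bbE[G])$, the same max-row-sum bound on $\bbE[G]$, and the same rank-two decomposition $\sum_{k<l}(G_{kl}-\bbE[G_{kl}])(e_k e_l^\top+e_l e_k^\top)$ fed into matrix Bernstein with $L=1$ and the diagonal variance proxy maximised over nodes. The bookkeeping you flag (diagonality of $\sum_{k<l}\bbE[X_{kl}^2]$ and the per-community evaluation of the row sums) matches the paper's computation exactly.
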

Together with Lemma \ref{lem::qld-spectral-radius}, the two results above allow us to establish a link between the edge connection probabilities in the graph -- and thus the expected node degree -- and the convergence of (\ref{eqn::QLD}).

\begin{theorem} \label{thm::er-convergence} Let $\epsilon > 0$ and $\clG$ a network
	polymatrix game satisfying Assumption \ref{ass::bimatrix-network} with corresponding network drawn from the ER model with parameter $p$. If
    $$T_k(N) > \rho^{ER}
    \text{  for all agents } k \in \clN,$$
    then with probability at least $1-\epsilon$, there is a unique QRE $\bfx^* \in \Delta$ and, for any initial condition, the (\ref{eqn::QLD}) trajectories converge to $\bfx^*$.
\end{theorem}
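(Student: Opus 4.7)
The plan is to derive Theorem \ref{thm::er-convergence} as a direct probabilistic corollary of the two preceding lemmas, so that the proof reduces to a clean composition of a random-matrix tail bound with a deterministic convergence result.

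First, I would apply Lemma \ref{lem::er-spectral-radius} to the adjacency matrix $G$ of the sampled ER graph. The hypotheses are met verbatim: $G$ is symmetric with zero main diagonal and independent $\texttt{Bernoulli}(p)$ upper-triangular entries. For the chosen $\epsilon > 0$, this yields the high-probability event
$$\clE_\epsilon := \{\rho(G) \le \rho^{ER}\},$$
with $\bbP(\clE_\epsilon) \ge 1 - \epsilon$.

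Next, I would argue pathwise on $\clE_\epsilon$, where $G$, and hence the induced network polymatrix game $\clG$, is deterministic. On this event, the hypothesis $T_k(N) > \rho^{ER}$ combined with $\rho(G) \le \rho^{ER}$ gives $T_k(N) > \rho(G)$ for every $k \in \clN$ (absorbing the intensity-of-identical-interests factor $\delta_I$ into the threshold, as is implicit in the statement relative to Lemma \ref{lem::qld-spectral-radius}). This is precisely the sufficient condition of Lemma \ref{lem::qld-spectral-radius}, so that lemma yields uniqueness of the QRE $\bfx^*$ together with its global asymptotic stability under (\ref{eqn::QLD}). In particular, from any initial condition in the product simplex $\Delta$, the QLD trajectory converges to $\bfx^*$.

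Combining the two steps produces the $1 - \epsilon$ probability guarantee asserted by the theorem. There is no real obstacle: the heavy lifting, namely the concentration argument for $\rho(G)$ and the Lyapunov/monotonicity-type argument establishing global stability of the QRE, has already been carried out in Lemmas \ref{lem::er-spectral-radius} and \ref{lem::qld-spectral-radius} respectively. The only subtlety worth flagging is that Lemma \ref{lem::qld-spectral-radius} is a deterministic dynamical-systems statement applied pathwise on the measurable event $\clE_\epsilon$, which is legitimate because both the QLD vector field and its equilibria are measurable functions of the same realization of $G$ whose spectral radius is controlled in Lemma \ref{lem::er-spectral-radius}.
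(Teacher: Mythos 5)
Your proposal is correct and is essentially the paper's own argument: Theorem \ref{thm::er-convergence} is obtained by conditioning on the high-probability event $\{\rho(G) \le \rho^{ER}\}$ from Lemma \ref{lem::er-spectral-radius} and then applying the deterministic convergence result of Lemma \ref{lem::qld-spectral-radius} pathwise on that event. You were also right to flag the $\delta_I$ factor: Lemma \ref{lem::qld-spectral-radius} requires $T_k > \delta_I \cdot \rho(G)$, so the threshold in the theorem statement should be read as $\delta_I \cdot \rho^{ER}$ (or $\delta_I$ normalised to one), exactly as you absorbed it.
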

\begin{theorem} \label{thm::sbm-convergence}
	Let $\epsilon > 0$ and $\clG$ a network polymatrix game satisfying Assumption
	\ref{ass::bimatrix-network} with corresponding network drawn from the SB model with probabilities $p_1,\ldots,p_C, q$.
    If 
    \[ T_k(N) > \rho^{SB} 
        \text{ for all agents } k \in \clN , \]
then with probability at least $1-\epsilon$, there is a unique QRE $\bfx^* \in \Delta$ and, for any initial condition, the (\ref{eqn::QLD}) trajectories converge to $\bfx^*$.
\end{theorem}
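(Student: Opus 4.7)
The plan is to obtain Theorem \ref{thm::sbm-convergence} as a direct probabilistic combination of the deterministic sufficient condition in Lemma \ref{lem::qld-spectral-radius} and the spectral concentration bound in Lemma \ref{lem::sbm-spectral-radius}. No new analytical machinery is required beyond these two results; the argument simply transports a ``realized-network'' statement to a ``random-network'' statement via a single union with a good event.

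First, I would invoke Lemma \ref{lem::sbm-spectral-radius} on the random adjacency matrix $G$ of $\clG$. Since $G$ is symmetric with zero diagonal and independent off-diagonal Bernoulli entries with the required within-community and between-community parameters, the lemma applies verbatim and yields
\[
    \mathbb{P}\bigl(\rho(G) \leq \rho^{SB}\bigr) \geq 1 - \epsilon.
\]
On this good event, the hypothesis $T_k > \rho^{SB}$ (read with the factor $\delta_I$ from Lemma \ref{lem::qld-spectral-radius}, i.e., $T_k > \delta_I \cdot \rho^{SB}$) implies $T_k > \delta_I \cdot \rho(G)$ for every agent $k$. Lemma \ref{lem::qld-spectral-radius} is purely deterministic in the realized network, so it then produces both uniqueness of the QRE $\bfx^*$ and global asymptotic stability of $\bfx^*$ under (\ref{eqn::QLD}) from any initial condition. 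Intersecting this deterministic conclusion with the good event completes the proof.

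The main obstacle in the overall development is not in assembling Theorem \ref{thm::sbm-convergence} itself --- which, given the two lemmas, reduces to a one-line combination --- but in establishing Lemma \ref{lem::sbm-spectral-radius}. That lemma requires a matrix-Bernstein-type concentration inequality applied to a symmetric random matrix whose entries have heterogeneous variances: within-community entries contribute at most $\sigma_{p,\max}^2$, while between-community entries contribute $q(1-q)$. Separating the two classes of edges and bounding their variance proxies uniformly is what produces the particular form of $\rho^{SB}$, with the $(N - N/C)q$ and $(N/C - 1)p_{\max}$ terms tracking the expected-degree contributions from inter- and intra-community edges. Ensuring this spectral bound remains sharp enough as $N$ grows, so that the theorem is non-vacuous whenever the expected degree is controlled, is the substantive technical point underlying the whole convergence story and the place where one would invest effort if pushing the result to heterogeneous community sizes or sparser regimes.
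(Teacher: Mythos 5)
Your proposal is correct and matches the paper's approach exactly: the theorem is obtained by conditioning on the high-probability event $\rho(G)\leq\rho^{SB}$ from Lemma \ref{lem::sbm-spectral-radius} and then applying the deterministic Lemma \ref{lem::qld-spectral-radius} on that event. You were also right to flag that the stated hypothesis should carry the factor $\delta_I$ (i.e., $T_k > \delta_I\cdot\rho^{SB}$) for the reduction to Lemma \ref{lem::qld-spectral-radius} to go through as written.
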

\begin{remark}
    In both the ER and SB model, the dominant $O(N)$ terms in the spectral-radius bounds are governed by expected node degrees. Indeed, the leading term in $\rho^{ER}$, namely $(N-1)p$, equals the expected degree in the ER model. Likewise, the leading term in $\rho^{SB}$, namely $$(N-N/C)q + (N/C-1) p_{\max},$$ represents the maximum expected degree across all blocks in the SB network. Further, the sub-leading $O(\sqrt{N \log N})$ term is governed by the variances of the degree distributions: $p(1-p)$ in the ER case, and on the block-wise maximum variance $\sigma_{p,\max}$ and variance $q(1-q)$ in the SBM case. 
\end{remark}
    \textbf{Implications to learning in multi-agent games.} Theorems \ref{thm::er-convergence} and \ref{thm::sbm-convergence} show that (\ref{eqn::QLD}) converges with low exploration rates whenever the expected degree of nodes in the network is small. Importantly, previous works \citep{sanders:chaos} linked the onset of unstable dynamics to the total number of agents $N$, but only in the limited case where all players interact with one-another, i.e., $p=1$. Our bounds refine this finding by showing that the onset of instability is related to the expected degree in the network, not to $N$ explicitly.
    Further, once we control the expected degree, the variance terms are automatically controlled as well. As a consequence, the threshold for chaotic behavior increases from linear to exponential in $N$. In other words, with controlled expected degrees, chaotic dynamics can only emerge at exponentially large network sizes, far beyond the regimes considered in earlier analyses.

\section{Experimental Evaluation}
\label{sec::experiments}
The aim of this section is to illustrate the implications of Theorems \ref{thm::er-convergence} and
\ref{thm::sbm-convergence}. 
We simulate the Q-Learning algorithm 
on network games, where the network is drawn from the ER and SB models. Note that we simulate the discrete-time algorithm described in Section~\ref{sec::q-learning}, rather than integrating the continuous-time model (\ref{eqn::QLD}). This is to explore the agreement between the theoretical predictions and the discrete-time update, which is the algorithm that is applied in practice. Each experiment runs for 4000 iterations, and we assess convergence numerically, as explained in Appendix \ref{appendix:simulation_setup}. We show that, as the expected node degree of the network increases, higher exploration rates are required for Q-Learning to converge by finding the boundary between stable and
unstable dynamics. We analyse how
the boundary evolves with $N$ and provide empirical evidence that our results 
extend beyond the assumptions used to derive the 
bounds.
    In our experiments, we assume all agents have the same exploration rate $T$ and thus omit the dependency on $k$. As our results are lower bounds across all $k$, this is without loss of generality.
\paragraph{Game Models.} We first simulate Q-Learning in the Network Shapley and Network Sato games, which are extensions of
classical bimatrix games, introduced by \cite{shapley:twoperson} and \cite{sato:rps}
respectively, to the network polymatrix setting. In both cases, each edge defines the same bimatrix
game, $(A^{kl}, A^{lk}) = (A, B) \; \forall (k, l) \in \clE$ 
where in the Network Shapley game:
\begin{eqnarray*} 
    A=\begin{pmatrix} 1 & 0 & \beta \\
                \beta & 1 & 0 \\ 0 & \beta & 1 \end{pmatrix}, \, B=\begin{pmatrix} -\beta & 1 & 0 \\ 0 &
                -\beta & 1 \\ 1 & 0 & -\beta \end{pmatrix}, 
    \end{eqnarray*}
for $\beta \in (0, 1)$; and in the Network Sato game:  
\begin{eqnarray*} 
    A=\begin{pmatrix} \epsX & -1 & 1 \\
    1 & \epsX & -1 \\
    -1 & 1 & \epsX \end{pmatrix}, \, B=\begin{pmatrix} \epsY & -1 & 1 \\
    1 & \epsY & -1 \\
    -1 & 1 & \epsY \end{pmatrix},
    \end{eqnarray*}   
where $\epsX, \epsY \in \bbR$. We fix $\epsilon_X = 0.5, \epsilon_Y=-0.3$ and $\beta = 0.2$. 

We also study the Conflict Network game, proposed in \citep{ewerhart:fp}, which does not satisfy Assumption \ref{ass::bimatrix-network}. In particular, each edge may be associated with a different bimatrix game. A network polymatrix game is a Conflict Network game if each bimatrix game $(A^{kl}, A^{lk})$ satisfies $(A^{kl})_{ij} = v^k (P^{kl})_{ij} - c^{kl}_i$ and $(A^{lk})_{ji} = v^j (P^{lk})_{ji} - c^{lk}_j$, 
where $v_k, v_l > 0$, $c_{kl} \in \bbR^{n_k}, c_{lk} \in \bbR^{n_l}$ and $(P^{kl})_{ij} + (P^{lk})_{ji} = 1$ for all $i \in
S_k$ and $j \in S_l$. To generate Conflict Network games, we independently sample each $v_k$ from the uniform distribution over $[0, 1]$ and generate $P^{kl}$ by randomly sampling its elements from the uniform distribution over $[-5, 5]$. We then construct $(A^{kl}, A^{lk})$ such that the Conflict Network condition is satisfied. 
\begin{figure*}[t!]
    \includegraphics[width=\textwidth]{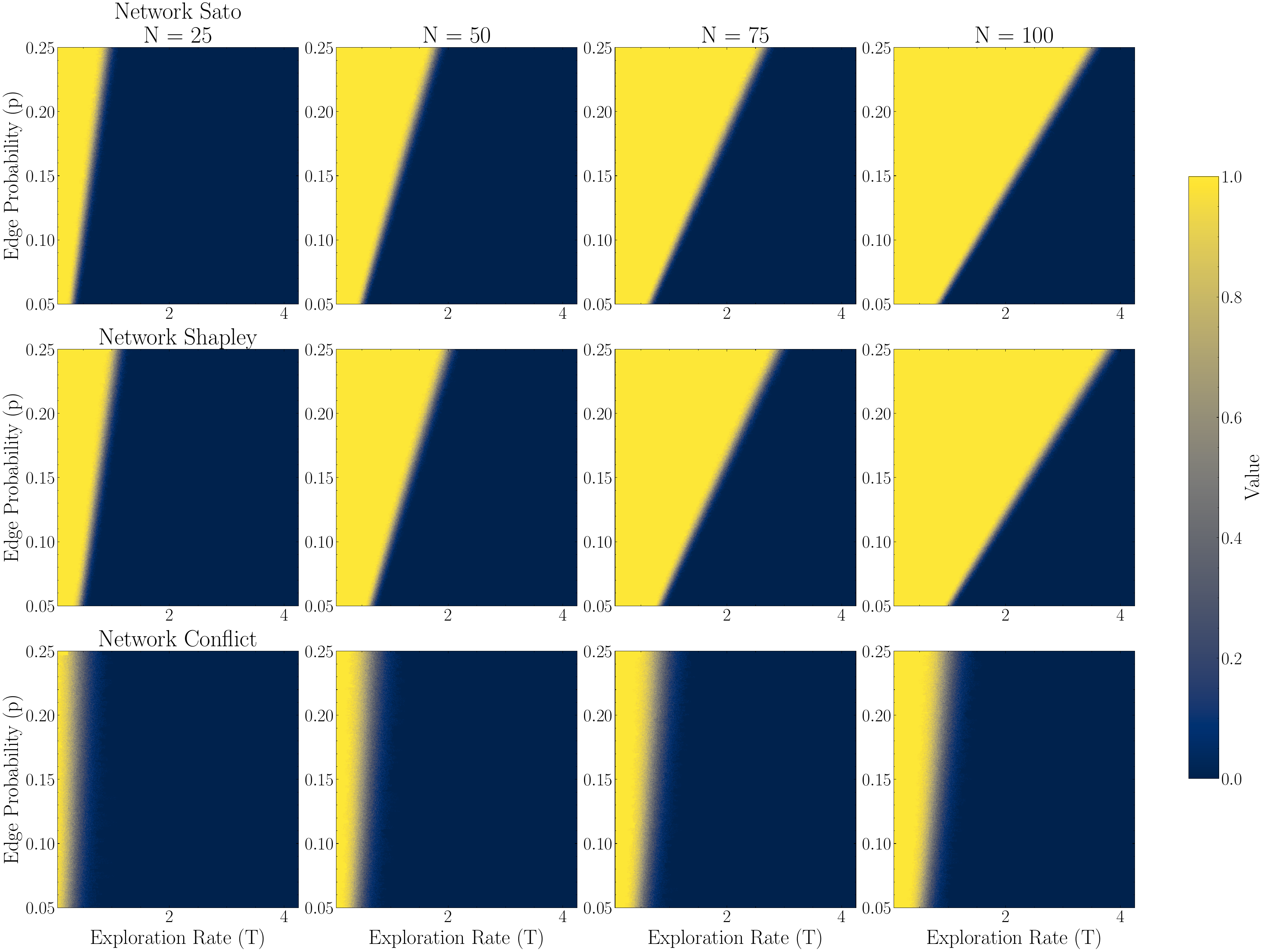}
        \caption{Proportion of (\ref{eqn::QLD}) simulations that diverge in network games with network drawn from the \er\ model using varying exploration rates $T$, edge probabilities $p$, and numbers of agents $N$. Each heatmap uses $(T,p)$ values on the grid $ [0.05,4.25] \times [0.05,0.25] $. Convergence at low exploration rates is more likely in sparser networks, i.e., low $p$. Further, the boundary between convergent and divergent behaviour shifts rapidly with the number of agents $N$ when $p$ is high, highlighting the need to control $p$ for large $N$. Additional heatmaps are available in Appendix \ref{appendix:extended_simulation_study}.}
        \label{fig::er-heatmaps}
\end{figure*}
\begin{figure*}[t!]
    \centering
    \begin{minipage}[b]{0.3\textwidth}
        \centering
        \includegraphics[width=\textwidth]{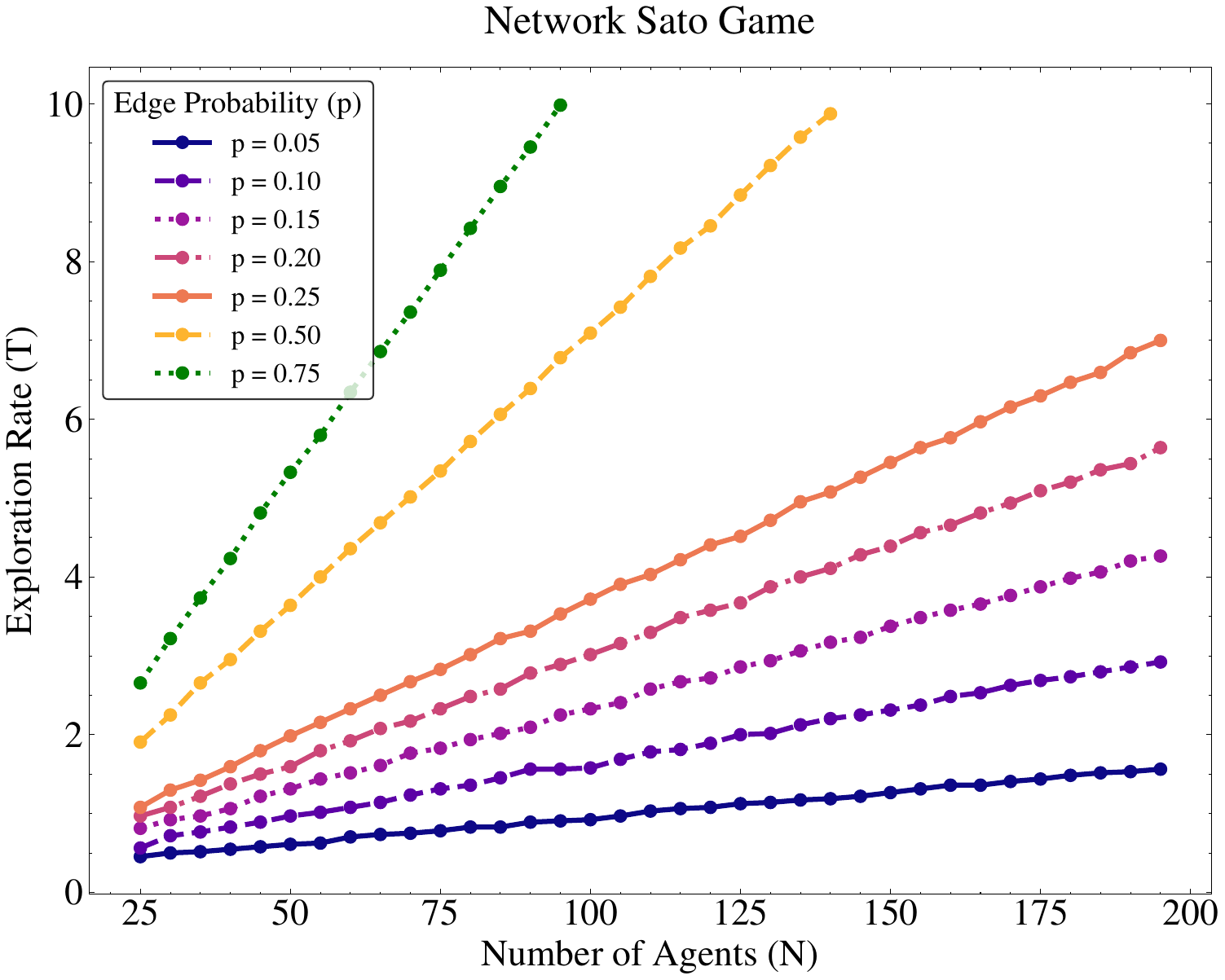}
    \end{minipage}
    \hfill
    \begin{minipage}[b]{0.3\textwidth}
        \centering
        \includegraphics[width=\textwidth]{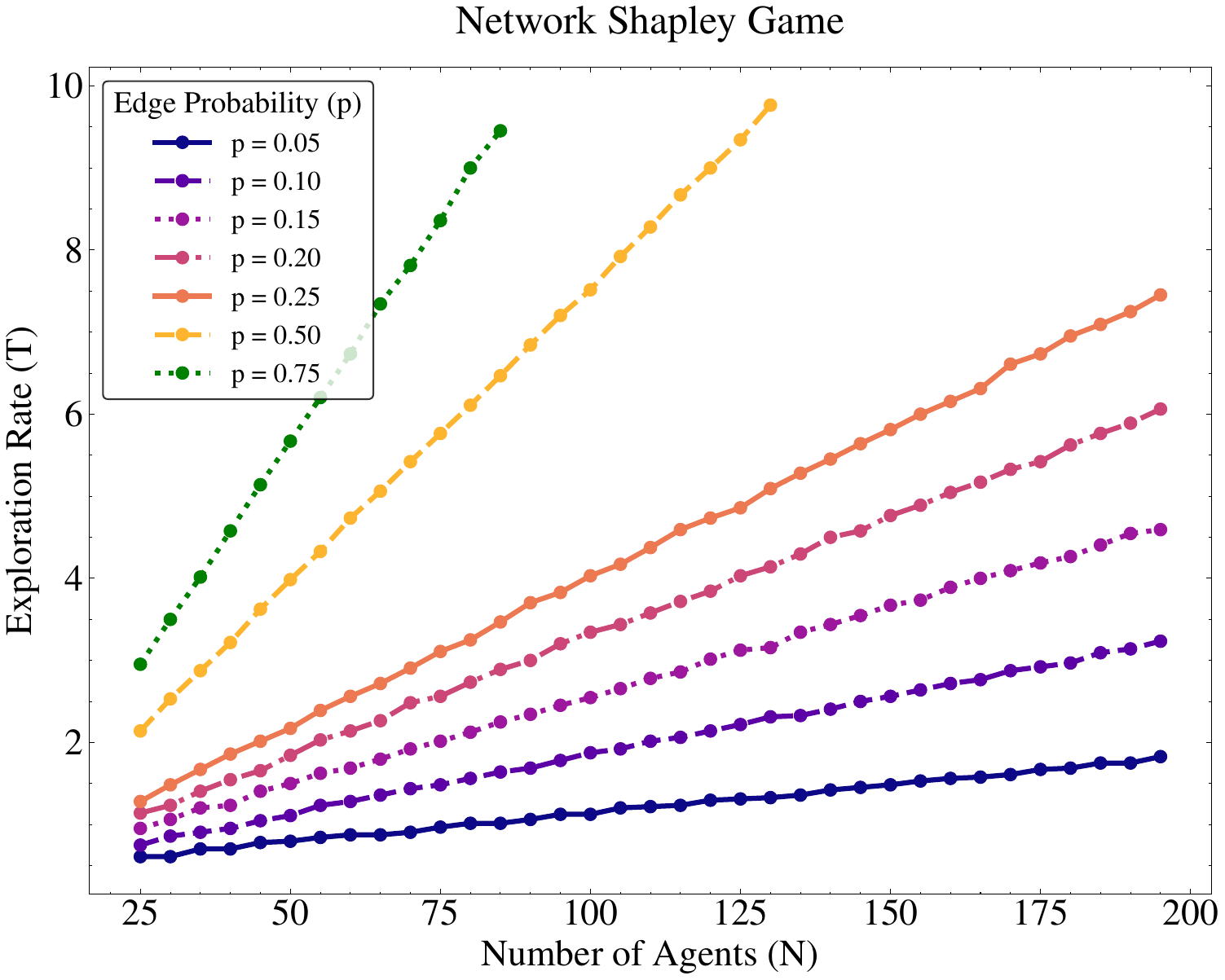}
    \end{minipage}
    \hfill
    \begin{minipage}[b]{0.3\textwidth}
        \centering
        \includegraphics[width=\textwidth]{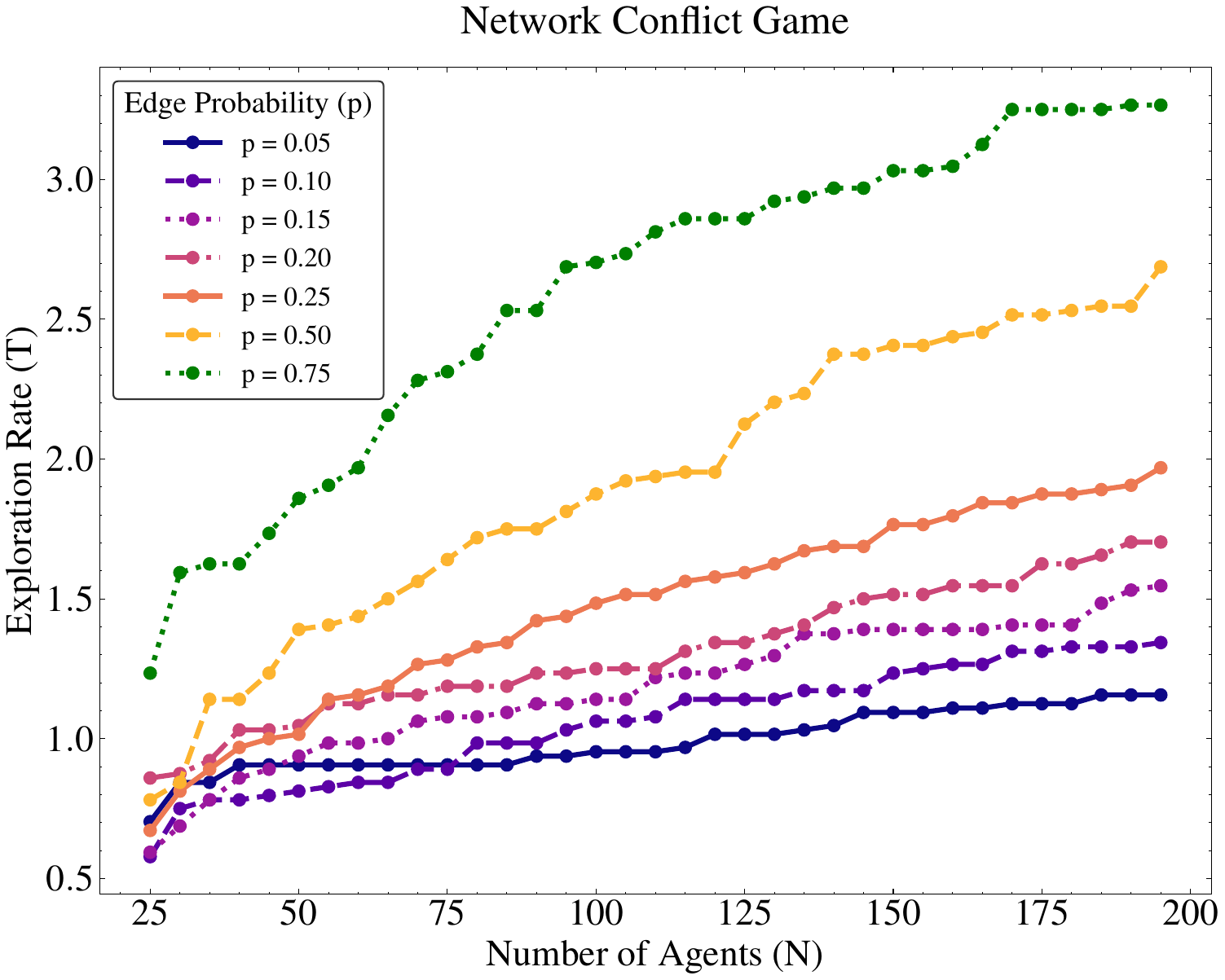}
    \end{minipage}
    
    \caption{Variation of the empirical convergence boundary in various network games as $N$ increases, for different values of $p$. The $y$-axis shows the smallest exploration rate for which \textbf{all} 50 simulations converged. Dense networks (high $p$) require substantially higher exploration rates to ensure convergence, while sparse networks (low $p$) maintain low convergence thresholds even for large $N$. Recall that lower exploration rates are preferable so that the QRE is closer to the Nash equilibrium.}
    \label{fig::area-plot}
\end{figure*}

\paragraph{\er\ Model.}In Figure \ref{fig::er-heatmaps}, we examine the convergence properties of Q-Learning on networks generated from the \er\ (ER) model. We vary the edge probability $p$, the exploration rate $T$, and the number of agents $N$. Figure~\ref{fig::area-plot} extends this analysis by plotting the convergence boundary as $N$ increases for different values of $p$. The key observation is that the rate of boundary growth increases with $p$, demonstrating that network density -- parameterised by the edge probability $p$ -- has a direct impact on learning dynamics. Our results reveal that dense networks (high expected node degree) require higher exploration rates to guarantee convergence, consistent with previous findings in \citep{sanders:chaos}. Recall from Section \ref{sec::solution-concepts}, however, that higher exploration rates yield Quantal Response Equilibria (QRE) that deviate further from the Nash Equilibrium. Crucially, we find that this effect is substantially reduced in sparse networks (small $p$).
\begin{figure*}[t!]
    \centering
    \includegraphics[width=\textwidth]{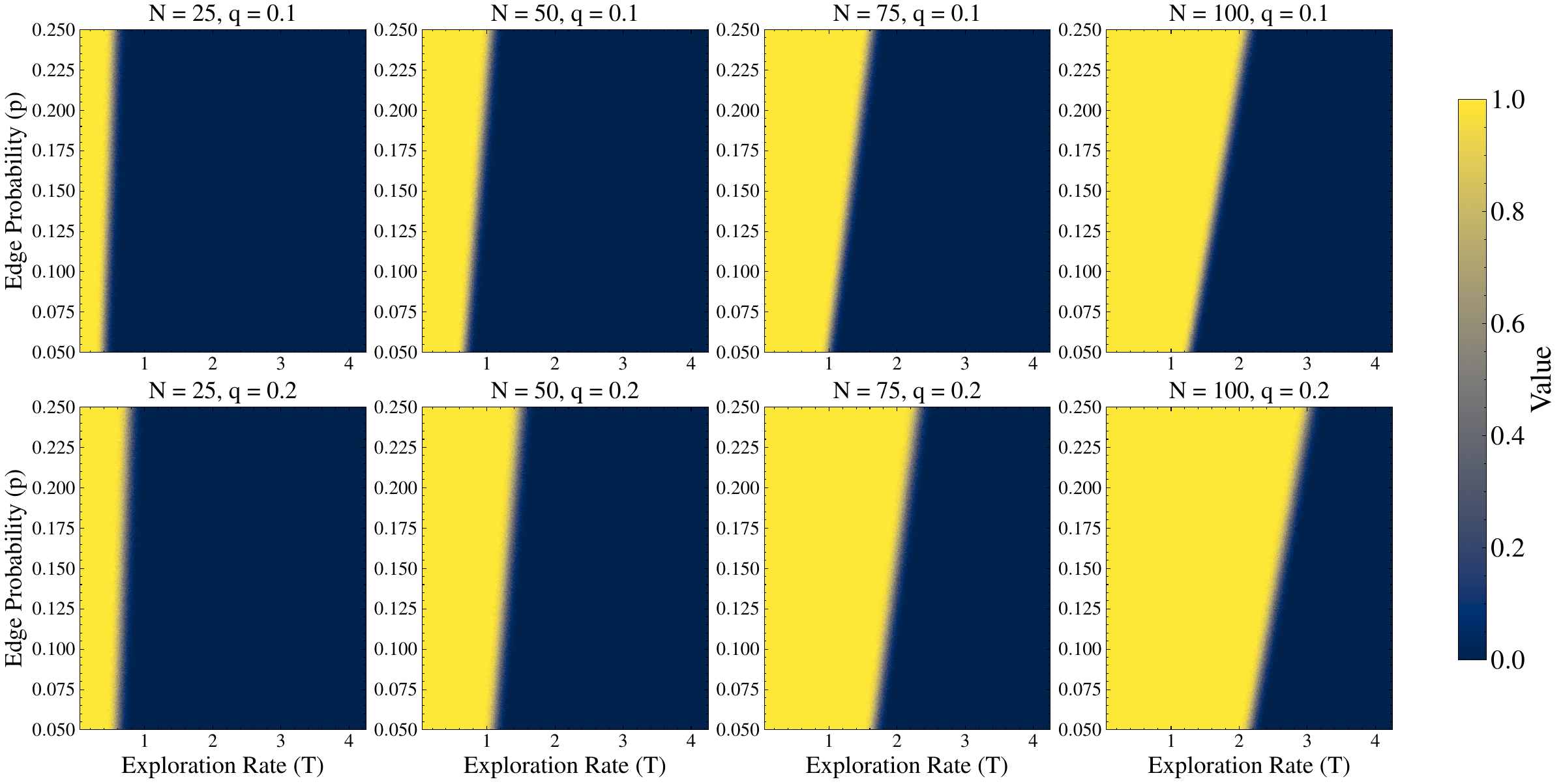}
    \caption{Proportion of (\ref{eqn::QLD}) simulations that diverge  in Network Sato games with network drawn from the Stochastic Block model, varying exploration rates $T$, intra-community edge probability $p$, inter-community edge probability $q$ and number of agents $N$. We use equidistant $(T,p)$ values in $[0.05,4.25] \times [0.05,0.25] $. The convergence boundary is jointly controlled by $p$ and $q$, offering greater flexibility to achieve convergence in structured networks. Extra results are provided in Appendix \ref{appendix:extended_simulation_study}.}
    \label{fig::sbm-heatmaps}
\end{figure*}
\begin{figure*}[t!]
    \centering
    \includegraphics[width=\textwidth]{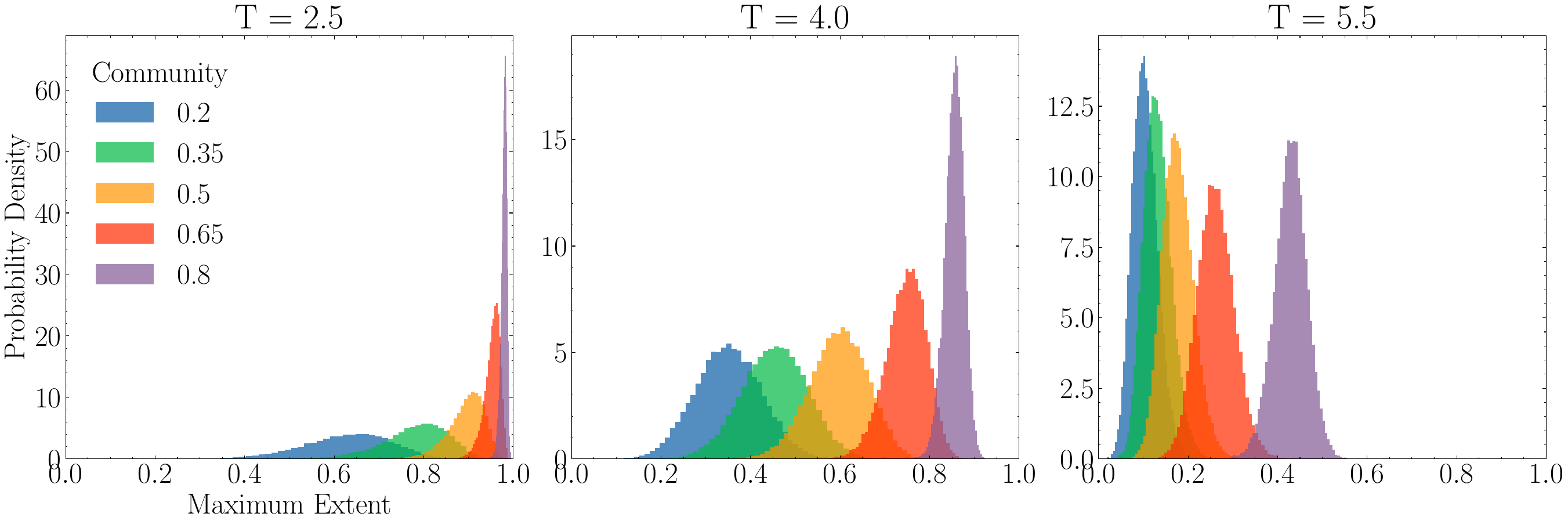}
    \caption{Probability density function of final strategy variation in Network Sato games on heterogeneous stochastic block networks with $N=200$ agents, showing the maximum strategy variation across agents during the final 300 iterations, computed from 1024 independent simulations. Networks contain five communities with varying intra-community connection probabilities $p$ (shown in legend) and fixed inter-community probability $q=0.1$. Communities with lower connectivity (blue, green) achieve convergence at lower exploration rates than densely connected communities (red, purple).}
    \label{fig::sbm-histograms}
\end{figure*}

The Network Sato game in Figure \ref{fig::area-plot} illustrates this phenomenon clearly. For $p = 0.05$, (\ref{eqn::QLD}) converges for $T < 2.0$ even with over $150$ agents. Conversely, for $p = 0.75$, convergence may fail with as few as 25 agents. These results also hold in the Conflict Network game, which extends beyond the setting of Assumption \ref{ass::bimatrix-network}. Although there is some stochasticity in the boundary due to randomised payoffs, the fundamental trend remains: convergence occurs at lower exploration rates when the expected node degree is lower. These findings underscore the importance for practitioners of carefully designing network structures to ensure multi-agent learning produces stable, near-optimal equilibria.
\paragraph{Stochastic Block Model.} We next evaluate Q-Learning on networks generated from the Stochastic Block (SB) model. Figure~\ref{fig::sbm-heatmaps} shows the proportion of diverged experiments in the Network Sato game, with results for the Network Shapley and Conflict games provided in Appendix~\ref{appendix:extended_simulation_study}.
Because the SB model has many parameters, we fix the number of agents in
each community to $N_c = 5$ and suppose that $p_c = p$ for all communities $c$. We then assess
the convergence of (\ref{eqn::QLD}) for different values of $p$ and $T$, while varying $q$ and
$N$. We find that, although each parameter influences last-iterate behaviour, convergence occurs less frequently in denser graphs, i.e., as $p$ and $q$ increase. Finally, we examine the case in which $p_c$ varies across communities in Figure~\ref{fig::sbm-histograms}. We simulate $1024$ independent runs of \ref{eqn::QLD} dynamics and compute the maximum difference between strategy components over the final iterations. The results show that communities with lower $p_c$ tend to exhibit smaller maximum differences. This suggests that allowing agents in different communities to use different exploration rates may yield convergence with smaller rates than required by Theorem \ref{thm::sbm-convergence}.

\section{Conclusion}
This paper examined Q-Learning dynamics in network polymatrix games through the lens of random networks. In doing so, we showed an explicit relationship between the last-iterate
convergence of Q-Learning and the expected degree of nodes in the underlying network. We provided theoretical lower
bounds on the exploration rates required for convergence in network polymatrix games where the
network is drawn from the \er\ or Stochastic Block models and showed that the convergence may be
guaranteed in games with arbitrarily many agents, so long as interactions in the network are controlled.
\paragraph{Future Work.} In our experiments, we found that the implications of our theorems hold in settings beyond the assumptions under which they were derived. This suggests several directions for future work. One is to examine if our results generalise to broader classes of network games, e.g., games with
continuous action sets. In addition, whilst our work
considers the repeated play of a matrix game, extending these results to the \emph{Markov Game} framework -- which introduces a state variable -- would enable theoretical guarantees to be placed in real-world multi-agent systems. Of further interest is the extension to other random network models, e.g., the Barabási-Albert model, used to model systems such as the Internet that exhibit preferential attachment. Notably, a direct extension to our current proof technique may be applied to random networks with dependent entries (see end of Appendix~\ref{app::proof2}). Analysing these models could reveal deeper connections between network structure and the convergence of learning algorithms.

\newpage

\bibliographystyle{plainnat}
\bibliography{references}

\newpage
\appendix
\section*{Appendix}
The appendix is organized as follows:
\begin{itemize}
    \item Appendix \ref{appendix:A} presents the necessary preliminaries for our proofs, including definitions and properties of \emph{monotone games} and certain properties of matrix norms.
    \item Appendix \ref{app::proof1} contains the proof of Lemma \ref{lem::qld-spectral-radius}.
    \item Appendix \ref{app::proof2} contains the proofs of Lemmas \ref{lem::er-spectral-radius} and \ref{lem::sbm-spectral-radius} and establishes asymptotic upper bounds on the spectral radius of networks drawn from the \er\ and Stochastic Block models.
    \item Appendix \ref{app::undirected} describes how payoffs are assigned to edges in the network polymatrix game under Assumption \ref{ass::bimatrix-network}. We clarify that the underlying network is undirected. 
    \item Appendix \ref{appedinx_further_sim_results} contains further simulation studies and gives details on the computational schemes used to produce the visualizations from the main body.
    \item Appendix \ref{appendix:Q_learning_dynamics} gives details on the continuous-time Q-Learning Dynamic (\ref{eqn::QLD}) and its relation to the underlying \emph{independent Q-Learning} algorithm discussed in Section \ref{sec::q-learning}.
\end{itemize}
\textbf{Disclaimer: } LLMs were used in this work for grammar checks and generally polishing the text and code.
\newpage
\section{Variational Inequalities and Monotone Games}\label{appendix:A}
The main idea of our convergence proof is to show that, under the setting of Lemma \ref{lem::ql-conv}, the corresponding network game is a \emph{strictly monotone} game \cite{melo:network,parise:network,mertikopoulos:finite,sorin:composite}. In such games, it is known that the equilibrium solution is unique \cite{melo:qre,facchinei:VI}. It is further known, from \cite{hussain:aamas}, that (\ref{eqn::QLD}) converges asymptotically in monotone games. We leverage this result to prove Lemma \ref{lem::qld-spectral-radius}.

We begin by framing game theoretic concepts in the language of variational inequalities.

\begin{definition}[Variational Inequality] 
     Consider a set $\clX \subset \bbR^d$ and a map $F \, : \clX \rightarrow \bbR^d$. The Variational
     Inequality (VI) problem $VI(\clX, F)$ is given as
        \begin{equation}\label{eqn::VIdef}
            \langle \bfx - \bfx^*, F(\bfx^*) \rangle \geq 0, \hspace{0.5cm} \text{ for all } \bfx \in \clX.
        \end{equation}
    We say that $\bfx^* \in \clX$ belongs to the set of solutions to a variational inequality problem
    $VI(\clX, F)$ if it satisfies (\ref{eqn::VIdef}).
\end{definition}

We now wish to reformulate the problem of finding Quantal Response Equilibria, or Nash Equilibria, as a problem of solving a variational inequality of a particular form. In such a case, the set $\clX$ is identified with the joint simplex $\Delta$. The map $F$ is identified with the \emph{pseudo-gradient} map of the game.

\begin{definition}[Pseudo-Gradient Map]
    The $\clG$ be a network polymatrix game with payoff functions $(u_k)_{k \in \clN}$. Then, the pseudo-gradient map of $\clG$ is $F : \bfx \mapsto (-D_{\bfx_k}u_k(\bfx_k, \bfx_{-k}))_{k \in \clN}$.
\end{definition}

This reformulation has been used, for example, by \cite{melo:qre} to show that a QRE of a game can be found by solving a variational inequality of a particular form. We reformulate their theorem for the particular case of network polymatrix games. 
\begin{lemma}[\cite{melo:qre}]
    Consider a game $\clG = (\clN, \clE, (\clA_k)_{k \in \clN}, (A^{kl}, A^{lk})_{(k, l) \in \clE})$ and for
    any $T_1, \ldots, T_N > 0$, let the \emph{regularised} game $\clG^H$ be the network game in which the payoff $u_k^H$ to each agent $k$ is given by
    \begin{equation} 
        u_k^H(\bfx_k, \bfx_{-k}) = \sum_{l:(k, l) \in \clE} \bfx_k^\top A^{kl} \bfx_l - T_k\langle \bfx_k, \ln \bfx_k \rangle.
    \end{equation}
    Now, let $F^H$ be the pseudo-gradient map of $\clG^H$. Then $\bfx^* \in \Delta$ is a QRE of $\clG$ if and only if $\bfx^*$ is a solution to $VI(\Delta, F^H)$. .
\end{lemma}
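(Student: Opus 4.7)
The plan is to decompose the claim into two well-known pieces: (i) the QRE of $\clG$ coincides with the Nash Equilibrium of the regularised game $\clG^H$, and (ii) the Nash Equilibria of a game whose per-agent payoffs are concave on a convex action set are exactly the solutions of the associated variational inequality on $\Delta$.

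To establish (i), I fix any $\bfx_{-k}^*$ and consider $u_k^H(\bfx_k,\bfx_{-k}^*) = \langle \bfx_k, r_k(\bfx_{-k}^*)\rangle - T_k \langle \bfx_k, \ln \bfx_k\rangle$, using the convention $0\ln 0 = 0$. This function is strictly concave in $\bfx_k$ because the negative entropy is strictly concave, and it is continuous on the compact simplex $\Delta(\clA_k)$, so it has a unique maximiser. Writing the KKT conditions for the single equality constraint $\sum_i x_{ki}=1$, the maximiser must satisfy $r_{ki}(\bfx_{-k}^*) - T_k(\ln x_{ki}^* + 1) = \mu$ for a single Lagrange multiplier $\mu$; solving and normalising produces the Boltzmann distribution $x_{ki}^* \propto \exp(r_{ki}(\bfx_{-k}^*)/T_k)$. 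Hence $\bfx^*$ satisfies the QRE fixed-point equations for every $k$ if and only if each $\bfx_k^*$ maximises $u_k^H(\cdot,\bfx_{-k}^*)$, which is precisely the definition of a Nash Equilibrium of $\clG^H$.

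For (ii), I would invoke the standard concave-game to variational-inequality equivalence. If $\bfx^*$ is a NE of $\clG^H$, then for each $k \in \clN$, concavity of $u_k^H(\cdot,\bfx_{-k}^*)$ on the convex set $\Delta(\clA_k)$ implies the first-order optimality condition $\langle \bfx_k-\bfx_k^*, \nabla_{\bfx_k} u_k^H(\bfx^*)\rangle \le 0$ for every $\bfx_k \in \Delta(\clA_k)$. Summing over $k$ and using $F^H(\bfx) = (-\nabla_{\bfx_k} u_k^H(\bfx))_{k\in\clN}$ yields $\langle \bfx-\bfx^*, F^H(\bfx^*)\rangle \ge 0$ for every $\bfx \in \Delta$, i.e., $\bfx^*$ solves $VI(\Delta, F^H)$. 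The converse uses the product structure $\Delta=\prod_k \Delta(\clA_k)$: given any solution of $VI(\Delta, F^H)$, choosing $\bfx_{-k} = \bfx_{-k}^*$ and letting $\bfx_k$ vary alone recovers the per-agent first-order condition, which, by concavity of $u_k^H(\cdot,\bfx_{-k}^*)$, is equivalent to $\bfx_k^*$ being a global maximiser.

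The main obstacle, though mild, is the behaviour of the entropy on the boundary of $\Delta(\clA_k)$, where its gradient diverges. Strict concavity of $-\langle \bfx_k,\ln\bfx_k\rangle$ together with the blow-up of its gradient as $x_{ki}\to 0^+$ forces the unique maximiser to lie in the relative interior of $\Delta(\clA_k)$. This is exactly what makes the KKT argument in (i) and the first-order optimality characterisation in (ii) go through without boundary complications. I would make this interior-point property explicit, observing that the Boltzmann form is automatically strictly positive, so $F^H$ and its inner products with feasible directions are unambiguously defined at every equilibrium, closing the equivalence.
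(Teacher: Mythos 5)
Your proof is correct. Note that the paper does not prove this lemma at all---it imports it by citation from \citet{melo:qre}---so there is nothing internal to compare against; your argument is the standard two-step equivalence (QRE of $\clG$ $\Leftrightarrow$ NE of the entropy-regularised game $\clG^H$ via the KKT/softmax computation, and NE of a concave game $\Leftrightarrow$ solution of the pseudo-gradient VI on the product simplex), which is exactly the route taken in the cited reference. Your closing remark on the boundary is the right technical care to take: the blow-up of the entropy gradient both forces any maximiser (and any VI solution, under the extended-value convention) into the relative interior and makes the first-order characterisation legitimate there, so the equivalence closes without gaps.
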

In fact, this same lemma can be used to show that $\bfx^*$ is a QRE of $\clG$ if and only if it is a Nash Equilibrium of $\clG^H$. This concept of regularised games has also been used to show connections between the replicator dynamics \cite{smith:replicator} and Q-Learning dynamics in \cite{piliouras:potential}, and to design algorithms for Nash Equilibrium seeking \cite{gemp:sample}. We require one final component from the study of variational inequalities which is often used to study uniqueness of equilibrium solutions: monotonicity \cite{parise:network,melo:network}.
\begin{definition}[Monotone Games]
    Let $\clG$ be a network polymatrix game with pseudo-gradient map $F$. $\clG$ is said to be 
    \begin{enumerate}
        \item \emph{Monotone} if, for all $\bfx, \bfy \in \Delta$,
        \begin{equation*}
            \langle F(\bfx) - F(\bfy), \bfx - \bfy \rangle \geq 0.
        \end{equation*}
        \item \emph{Strictly Monotone} if, for all $\bfx, \bfy \in \Delta$,
        \begin{equation*}
            \langle F(\bfx) - F(\bfy), \bfx - \bfy \rangle > 0.
        \end{equation*}
        \item \emph{Strongly Monotone} with constant $\alpha > 0$ if, for all $\bfx, \bfy \in \Delta$,
        \begin{equation*}
            \langle F(\bfx) - F(\bfy), \bfx - \bfy \rangle \geq \alpha ||\bfx - \bfy||^2_2.
        \end{equation*}
    \end{enumerate}
\end{definition}
By proving monotonicity properties of a game, we can leverage the following results from literature.
\begin{lemma}[\cite{melo:qre}] \label{lem::unique-qre} Consider a game $\clG$ and for any $T_1, \ldots, T_N > 0$, let $F$ be the pseudo-gradient map of $\clG^H$. $\clG$ has a unique QRE $\bfx^* \in \Delta$ if $\clG^H$ is strongly monotone with any $\alpha>0$.
\end{lemma}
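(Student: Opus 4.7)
The plan is to exploit the variational reformulation already established in the excerpt: by the preceding result of \cite{melo:qre}, a joint strategy $\bfx^* \in \Delta$ is a QRE of $\clG$ if and only if it solves the variational inequality $VI(\Delta, F)$ associated with the pseudo-gradient map $F$ of the regularised game $\clG^H$. Consequently, uniqueness of the QRE reduces to uniqueness of solutions to $VI(\Delta, F)$, which is a classical consequence of strong monotonicity.

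For uniqueness, I would argue by contradiction. Suppose $\bfx^*$ and $\bfy^*$ are two distinct solutions of $VI(\Delta, F)$. Plugging $\bfx = \bfy^*$ into the VI satisfied by $\bfx^*$, and $\bfx = \bfx^*$ into the VI satisfied by $\bfy^*$, gives
\begin{equation*}
    \langle \bfy^* - \bfx^*, F(\bfx^*) \rangle \geq 0, \qquad \langle \bfx^* - \bfy^*, F(\bfy^*) \rangle \geq 0.
\end{equation*}
Summing these two inequalities yields $\langle F(\bfx^*) - F(\bfy^*), \bfx^* - \bfy^* \rangle \leq 0$. On the other hand, strong monotonicity of $\clG^H$ with constant $\alpha>0$ gives
\begin{equation*}
    \langle F(\bfx^*) - F(\bfy^*), \bfx^* - \bfy^* \rangle \geq \alpha \lVert \bfx^* - \bfy^* \rVert_2^2.
\end{equation*}
Combining the two bounds forces $\lVert \bfx^* - \bfy^* \rVert_2 = 0$, contradicting $\bfx^* \neq \bfy^*$.

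It remains to guarantee that at least one QRE exists, so that uniqueness is not vacuous. Here I would invoke a standard existence result for variational inequalities (e.g.\ Hartman--Stampacchia): the simplex $\Delta$ is nonempty, convex and compact, and the pseudo-gradient $F$ of $\clG^H$ is continuous on $\Delta$ since the entropy regulariser $-T_k \langle \bfx_k, \ln \bfx_k\rangle$ is smooth on the relative interior and extends continuously to the boundary with bounded gradient components in the relevant directions. Therefore $VI(\Delta, F)$ admits at least one solution, which together with the uniqueness argument above yields a unique QRE $\bfx^* \in \Delta$ as claimed.

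The main obstacle I anticipate is not the uniqueness contradiction itself, which is essentially one line of algebra once the VI reformulation is in hand, but the care required in the existence step: the entropy term blows up in gradient at the boundary of $\Delta$, so one should either work with the Bregman/projection formulation used by \cite{melo:qre} or note that strong monotonicity of $F$ together with continuity on the open simplex suffices to push solutions into the interior, avoiding boundary pathologies.
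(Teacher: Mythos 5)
The paper does not actually prove this lemma: it is imported verbatim from \cite{melo:qre} as a known result, so there is no in-paper proof to compare against. Your reconstruction is the standard and correct one. The uniqueness half is exactly right: using the VI reformulation (QRE of $\clG$ iff solution of the VI for the pseudo-gradient $F$ of $\clG^H$), testing each solution's inequality at the other solution and summing gives $\langle F(\bfx^*) - F(\bfy^*), \bfx^* - \bfy^* \rangle \leq 0$, which contradicts strong monotonicity unless $\bfx^* = \bfy^*$. You are also right to flag the existence step as the only delicate point: the entropic component of $F$ contains $T_k(\ln \bfx_k + \mathbf{1})$, which diverges on the boundary of $\Delta$, so Hartman--Stampacchia cannot be invoked on the closed simplex as stated. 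The cleanest repair is to bypass the VI for existence altogether and apply Brouwer's fixed-point theorem to the logit (softmax) best-response map, which is continuous from $\Delta$ into its relative interior; this yields a QRE directly, and since every QRE has strictly positive components, the uniqueness argument then runs entirely in the relative interior where $F$ is well defined. With that substitution (or your alternative of shrinking the simplex and using the barrier property of the entropy), the argument is complete.
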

\begin{lemma}[\cite{hussain:aamas}] \label{lem::ql-conv} If the game $G$ is \emph{monotone}, then the Q-Learning Dynamics (\ref{eqn::QLD}) converges to a unique QRE $\bfx^* \in \Delta$ with any positive exploration rates $T_1, \ldots, T_N > 0$.
\end{lemma}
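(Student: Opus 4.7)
The plan is to recognize (\ref{eqn::QLD}) as the replicator dynamics applied to the entropy-regularized game $\clG^H$ introduced just before this lemma, and then to drive a KL-divergence Lyapunov argument using the strict monotonicity that the entropy regularizer contributes. First, I would set $\tilde r_{ki}(\bfx) := r_{ki}(\bfx_{-k}) - T_k \ln x_{ki}$ and verify by direct computation that
\begin{equation*}
    r_{ki}(\bfx_{-k}) - \langle \bfx_k, r_k(\bfx)\rangle + T_k \sum_{j \in \clA_k} x_{kj}\ln\frac{x_{kj}}{x_{ki}} \;=\; \tilde r_{ki}(\bfx) - \langle \bfx_k, \tilde r_k(\bfx)\rangle,
\end{equation*}
so (\ref{eqn::QLD}) is exactly the replicator dynamics with payoff field $\tilde r_k = -F^H_k + T_k\mathbf 1$, where $F^H$ is the pseudo-gradient of $\clG^H$ (the additive constant $T_k \mathbf 1$ is orthogonal to the tangent space of $\Delta$ and drops out).

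Next, I would settle uniqueness of the QRE. Monotonicity of $\clG$ gives monotonicity of $F$, and adding $T_k \nabla h(\bfx_k) = T_k(\ln \bfx_k + \mathbf 1)$ (with $h$ the negative entropy, strictly convex on the interior of the simplex) makes $F^H$ strictly monotone on $\Delta$. Invoking Lemma \ref{lem::unique-qre} (or directly the VI argument behind it) yields a unique Nash equilibrium $\bfx^*$ of $\clG^H$, which by the QRE/$\clG^H$ correspondence is the unique QRE of $\clG$; since $T_k>0$, $\bfx^*$ lies in $\mathrm{relint}(\Delta)$.

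For the convergence claim, I would employ the Lyapunov function
\begin{equation*}
    V(\bfx) \;:=\; \sum_{k \in \clN}\sum_{i \in \clA_k} x_{ki}^* \,\ln\frac{x_{ki}^*}{x_{ki}},
\end{equation*}
which is the sum of per-agent KL divergences from $\bfx^*$ to $\bfx$. Differentiating along (\ref{eqn::QLD}) and using the step-1 identity together with $\sum_i x_{ki}^*=1$ and $\langle \bfx_k - \bfx_k^*, \mathbf 1\rangle = 0$ gives
\begin{equation*}
    \dot V \;=\; -\sum_{k \in \clN} \sum_{i \in \clA_k} x_{ki}^*\,\frac{\dot x_{ki}}{x_{ki}} \;=\; -\langle \bfx - \bfx^*,\; F^H(\bfx)\rangle.
\end{equation*}
Splitting $F^H(\bfx) = (F^H(\bfx) - F^H(\bfx^*)) + F^H(\bfx^*)$ and using the VI characterization $\langle F^H(\bfx^*), \bfx - \bfx^*\rangle \ge 0$ together with strict monotonicity of $F^H$ yields $\dot V \le 0$, with equality if and only if $\bfx = \bfx^*$. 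LaSalle's invariance principle on the compact set $\Delta$ then gives $\bfx(t) \to \bfx^*$ for every initial condition in $\mathrm{relint}(\Delta)$.

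The main obstacle I anticipate is the boundary behaviour of $V$: the KL divergence blows up as any coordinate of $\bfx$ approaches zero, so the Lyapunov machinery requires that trajectories remain bounded away from the boundary. This is resolved by two standard facts: the replicator flow preserves strict positivity of every component (since $\dot x_{ki}/x_{ki}$ is bounded), so $\bfx(t) \in \mathrm{relint}(\Delta)$ for all $t$; and since $V$ is non-increasing along the flow, the sublevel set $\{V \le V(\bfx(0))\}$ is compact inside $\mathrm{relint}(\Delta)$. A secondary point worth highlighting is that the strict decrease of $V$ away from $\bfx^*$ comes entirely from the regularizer, not from $F$ itself, which is why monotonicity of $\clG$ (rather than strict monotonicity) suffices for the lemma.
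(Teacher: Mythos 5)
The paper does not actually prove this lemma itself --- it is imported from \cite{hussain:aamas} as a known result --- and your argument is a correct reconstruction of essentially the proof used there: rewrite (\ref{eqn::QLD}) as replicator dynamics for the entropy-regularised payoff field, obtain strict (indeed strong, with constant $\min_k T_k$) monotonicity of $F^H$ from the regulariser via Proposition \ref{prop::strong-conv} and Lemma \ref{lem::unique-qre}, and run the KL-divergence Lyapunov/LaSalle argument with the VI characterisation of the QRE. The only nit is your claim that $\dot x_{ki}/x_{ki}$ is \emph{bounded}: the term $-T_k \ln x_{ki}$ blows up near the boundary, but it is bounded \emph{below} on $\mathrm{relint}(\Delta)$, which is all that positivity preservation requires --- and your observation that the sublevel sets of $V$ are compact subsets of $\mathrm{relint}(\Delta)$ (since $\bfx^*$ has strictly positive components) already covers this.
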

Finally, note that a  map $g : \Delta \rightarrow \bbR$ is \emph{strongly convex}
with constant $\alpha$ if, for all $\bfx, \bfy \in \Delta$
\begin{equation*}
    g(\bfy) \geq g(\bfx) + Dg(\bfx)^\top (\bfy - \bfx) + \frac{\alpha}{2}\lVert\bfx - \bfy \rVert^2_2.
\end{equation*}
If the map $g(\bfx)$ is twice-differentiable, then it is strongly convex if its Hessian $D^2_{\bfx} g(\bfx)$ is strongly
positive definite with constant $\alpha$. Thus, all eigenvalues of $D^2_{\bfx} g(\bfx)$ are larger than
$\alpha$. It also holds that, if $D^2_{\bfx} g(\bfx)$ is strongly positive definite, the gradient $D_{\bfx} g(\bfx)$ is strongly monotone. To apply this in our setting, we use the following result.
\begin{proposition}[\cite{melo:qre}] \label{prop::strong-conv} The function $g(\bfx_k) = T_k \langle
        \bfx_k, \ln \bfx_k \rangle$ is strongly convex with constant $T_k$.
\end{proposition}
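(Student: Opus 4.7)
} The plan is to establish strong convexity of $g(\bfx_k) = T_k \langle \bfx_k, \ln \bfx_k \rangle$ through the Hessian characterization: if $D^2_{\bfx_k} g(\bfx_k) \succeq T_k \cdot I$ uniformly over the relative interior of the simplex $\Delta(\clA_k)$, then $g$ is strongly convex with constant $T_k$. Since $g$ is twice continuously differentiable on the relative interior, reducing the problem to a pointwise spectral bound on the Hessian is the natural route.

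First, I would compute the gradient componentwise: writing $g(\bfx_k) = T_k \sum_{i \in \clA_k} x_{ki} \ln x_{ki}$, the partial derivatives are $\partial_{x_{ki}} g = T_k(\ln x_{ki} + 1)$. Differentiating once more yields $\partial^2_{x_{ki}x_{kj}} g = 0$ for $i \ne j$ and $\partial^2_{x_{ki}x_{ki}} g = T_k / x_{ki}$. Therefore
\begin{equation*}
    D^2_{\bfx_k} g(\bfx_k) = T_k \cdot \mathrm{diag}\!\left( 1/x_{k1}, \ldots, 1/x_{k n_k} \right).
\end{equation*}

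Next, I would exploit the simplex constraint. Since $\bfx_k \in \Delta(\clA_k)$ forces $x_{ki} \in (0,1]$ for every $i \in \clA_k$, each diagonal entry satisfies $T_k/x_{ki} \ge T_k$. Hence for any direction $v \in \bbR^{n_k}$,
\begin{equation*}
    v^\top D^2_{\bfx_k} g(\bfx_k)\, v \;=\; T_k \sum_{i \in \clA_k} \frac{v_i^2}{x_{ki}} \;\ge\; T_k \, \lVert v \rVert_2^2,
\end{equation*}
so $D^2_{\bfx_k} g(\bfx_k) \succeq T_k \cdot I$ uniformly on the relative interior of $\Delta(\clA_k)$.

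Finally, integrating this pointwise inequality along the segment $[\bfx_k,\bfy_k] \subset \Delta(\clA_k)$ via a second-order Taylor expansion with integral remainder yields $g(\bfy_k) \ge g(\bfx_k) + Dg(\bfx_k)^\top(\bfy_k - \bfx_k) + \tfrac{T_k}{2}\lVert \bfy_k - \bfx_k \rVert_2^2$, which is exactly the definition of strong convexity with constant $T_k$; the statement then extends to the closed simplex by continuity of $g$ (with the standard convention $0 \ln 0 = 0$). There is no substantive obstacle here: the only mild technicality is the behavior at the boundary of the simplex where some $x_{ki} = 0$, but since the bound $T_k/x_{ki} \ge T_k$ only improves as $x_{ki} \to 0$, the strong convexity inequality passes to the limit without issue.
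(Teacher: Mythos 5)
Your proof is correct and is exactly the argument one would expect: the paper itself does not prove Proposition~\ref{prop::strong-conv} (it cites \cite{melo:qre} and, in the surrounding text, invokes precisely the Hessian characterization of strong convexity that you use), so your derivation --- computing $D^2_{\bfx_k} g(\bfx_k) = T_k\,\mathrm{diag}(1/x_{k1},\ldots,1/x_{kn_k})$ and bounding each diagonal entry below by $T_k$ via $x_{ki} \le 1$ on the simplex --- fills in the intended standard argument rather than diverging from it. Your handling of the simplex boundary is also fine, since the defining inequality involves $Dg(\bfx_k)$ and is therefore only meaningful on the relative interior, with the closure handled by continuity as you note.
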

\subsection{Matrix Norms}
We close with a few properties of matrices that are useful towards our decomposition of the payoff matrices and the graph adjacency matrix.
\begin{proposition}
    For any matrix $A$, $\lVert A \rVert_2 = \sqrt{\lambda_{\max}(A^
    \top A)}$, where $\lambda_{\max}(\cdot)$ denotes the largest eigenvalue of a matrix. If, in addition, the matrix is symmetric, $\lVert A \rVert_2 = \rho(A)$, where $\rho(A)$ is the spectral radius of $A$.
\end{proposition}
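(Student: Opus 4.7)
The plan is to work directly from the definition $\lVert A \rVert_2 = \sup_{x : \lVert x \rVert_2 = 1} \lVert Ax \rVert_2$ given in the Additional Notation paragraph, and reduce the computation of this supremum to a spectral problem for the symmetric, positive semi-definite matrix $A^\top A$. The identity $\lVert Ax \rVert_2^2 = \langle Ax, Ax \rangle = x^\top A^\top A\, x$ is the starting point: it replaces the variational problem over a vector-valued map $x \mapsto Ax$ with a scalar quadratic form problem in $x$.

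First I would establish the spectral side. Since $A^\top A$ is real symmetric and positive semi-definite, the spectral theorem gives an orthonormal basis $v_1, \ldots, v_N$ of eigenvectors with non-negative eigenvalues $\mu_1 \geq \cdots \geq \mu_N \geq 0$, so in particular $\mu_1 = \lambda_{\max}(A^\top A)$. For any unit vector $x = \sum_i c_i v_i$ with $\sum_i c_i^2 = 1$, expanding gives $x^\top A^\top A x = \sum_i \mu_i c_i^2 \leq \mu_1$, with equality attained at $x = v_1$. Taking the supremum over the unit sphere and then a square root yields the first claim $\lVert A \rVert_2 = \sqrt{\lambda_{\max}(A^\top A)}$.

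Next I would specialize to the symmetric case. If $A = A^\top$, then $A^\top A = A^2$, and for any eigenpair $(\lambda_i, v_i)$ of $A$ we have $A^2 v_i = \lambda_i^2 v_i$. Hence the eigenvalues of $A^2$ are exactly $\lambda_1^2, \ldots, \lambda_N^2$, so $\lambda_{\max}(A^\top A) = \max_i \lambda_i^2 = \left(\max_i |\lambda_i|\right)^2$. Combining this with the first part and taking a square root yields $\lVert A \rVert_2 = \max_i |\lambda_i|$. Under the paper's convention (stated right after the definition of spectral norm) that $\rho(A) = \max\{\lambda_1, \ldots, \lambda_N\}$ when $A$ is symmetric and its eigenvalues are real, one identifies $\max_i |\lambda_i|$ with $\rho(A)$ — this is consistent with the usage of $\rho(G)$ for adjacency matrices throughout Section 3, where $G$ is symmetric with nonnegative entries so its largest eigenvalue is nonnegative and dominates the others in absolute value by Perron--Frobenius.

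There is essentially no hard step here; the only subtlety worth noting is the interpretation of $\rho(A)$ under the paper's wording, which equates $\rho(A)$ with the largest eigenvalue rather than the largest eigenvalue in absolute value. For the symmetric matrices arising in the paper (the Laplacian-free adjacency matrices $G$ of random graphs), these coincide, so the identity $\lVert A \rVert_2 = \rho(A)$ applies directly and can be invoked in the proofs of Lemmas \ref{lem::er-spectral-radius} and \ref{lem::sbm-spectral-radius} without further comment.
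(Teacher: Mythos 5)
Your proof is correct and complete; note, however, that the paper states this proposition without proof (it is a standard fact cited as background in the Matrix Norms subsection of Appendix A), so there is no authorial argument to compare against. Your Rayleigh-quotient argument via the spectral decomposition of $A^\top A$, followed by the observation that $A^\top A = A^2$ in the symmetric case, is the canonical derivation and would serve as the omitted proof. Your closing remark is also well taken: under the paper's stated convention $\rho(A)=\max\{\lambda_1,\dots,\lambda_N\}$ the identity $\lVert A\rVert_2=\rho(A)$ can fail for a general symmetric matrix (e.g.\ $A=-I$), and it is only the Perron--Frobenius property of the nonnegative adjacency matrices $G$ that makes the two notions coincide where the proposition is actually invoked, namely in the proofs of Lemmas \ref{lem::er-spectral-radius} and \ref{lem::sbm-spectral-radius} and in the bound $\rho\left(\frac{N(\bfx)+N(\bfx)^\top}{2}\right)=\frac12\lVert N(\bfx)+N(\bfx)^\top\rVert_2$ in Appendix \ref{app::proof1}, where the intended reading is the largest eigenvalue in absolute value.
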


\begin{proposition}[Weyl's Inequality]
    Let $J = D + N$ where $D$ and $N$ are symmetric matrices. Then it holds that
    \begin{equation*}
        \lambda_{\min}(J) \geq \lambda_{\min}(D) + \lambda_{\min}(N),
    \end{equation*}
    where $\lambda_{\min}(\cdot)$ denotes the smallest eigenvalue of a matrix.
\end{proposition}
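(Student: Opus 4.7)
The plan is to use the Rayleigh--Ritz variational characterization of the smallest eigenvalue of a real symmetric matrix, which is the most direct route for this version of Weyl's inequality. Recall that for any symmetric matrix $A \in \bbR^{n \times n}$, the spectral theorem yields an orthonormal eigenbasis, from which one obtains
\begin{equation*}
    \lambda_{\min}(A) \;=\; \min_{\bfx \in \bbR^n,\, \lVert \bfx \rVert_2 = 1} \bfx^\top A \bfx,
\end{equation*}
and moreover $\bfx^\top A \bfx \geq \lambda_{\min}(A)$ for every unit vector $\bfx$. I would state this as the first step, either citing it as a standard consequence of the spectral theorem or sketching the one-line argument via diagonalization $A = Q \Lambda Q^\top$ and the change of variable $\bfy = Q^\top \bfx$.

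Next, I would exploit the linearity of the quadratic form: since $J = D + N$, for any unit vector $\bfx \in \bbR^n$ we have
\begin{equation*}
    \bfx^\top J \bfx \;=\; \bfx^\top D \bfx + \bfx^\top N \bfx \;\geq\; \lambda_{\min}(D) + \lambda_{\min}(N),
\end{equation*}
where the inequality applies the variational lower bound to each symmetric summand separately (this is where symmetry of $D$ and $N$ is used individually, rather than just symmetry of the sum). Taking the infimum over unit vectors $\bfx$ on the left-hand side and invoking the variational characterization a second time gives
\begin{equation*}
    \lambda_{\min}(J) \;=\; \min_{\lVert \bfx \rVert_2 = 1} \bfx^\top J \bfx \;\geq\; \lambda_{\min}(D) + \lambda_{\min}(N),
\end{equation*}
which is exactly the claimed bound.

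There is no real obstacle here; the only conceptual point worth flagging is that the inequality would fail without symmetry of $D$ and $N$ individually (the Rayleigh quotient bound requires it), so I would briefly emphasize that the proof leans on the spectral theorem being applicable to both summands, not just to $J$. If desired, I would also note as a remark that the same argument, applied to $-J = (-D) + (-N)$, yields the dual upper bound $\lambda_{\max}(J) \leq \lambda_{\max}(D) + \lambda_{\max}(N)$, which is useful elsewhere in the appendix when controlling $\rho(G)$ via decompositions of the adjacency matrix into mean and centered parts.
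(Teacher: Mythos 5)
Your proof is correct. The paper states this proposition without proof in Appendix~\ref{appendix:A} as a standard fact, so there is no argument to compare against; your Rayleigh--Ritz derivation is the canonical one, and every step (the variational characterization $\lambda_{\min}(A) = \min_{\lVert \bfx \rVert_2 = 1} \bfx^\top A \bfx$, the splitting of the quadratic form, and the final minimization) is sound. Your remark that symmetry of each summand individually is what makes the argument work is exactly the right point to flag.
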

\begin{proposition} \label{prop::kron-two-norm}
    Let $G, A$ be matrices and let $\otimes$ denote the Kronecker product. Then
    \begin{equation}
        \lVert G \otimes A \rVert_2 = \lVert G \rVert_2 \lVert A \rVert_2
    \end{equation}
\end{proposition}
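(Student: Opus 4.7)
The plan is to reduce the identity $\lVert G \otimes A \rVert_2 = \lVert G \rVert_2 \lVert A \rVert_2$ to two well-known algebraic facts about Kronecker products, and then apply the spectral characterization of the operator 2-norm already recalled in the preliminaries: $\lVert M \rVert_2 = \sqrt{\lambda_{\max}(M^\top M)}$.

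First, I would invoke the \emph{mixed-product property} of Kronecker products, namely $(A_1 \otimes B_1)(A_2 \otimes B_2) = (A_1 A_2) \otimes (B_1 B_2)$ whenever the inner dimensions match. Applying this with transposed factors yields
\begin{equation*}
    (G \otimes A)^\top (G \otimes A) \;=\; (G^\top \otimes A^\top)(G \otimes A) \;=\; (G^\top G) \otimes (A^\top A).
\end{equation*}
Both $G^\top G$ and $A^\top A$ are symmetric positive semidefinite, so their spectra consist of nonnegative real eigenvalues whose maxima are, by the proposition preceding Weyl's inequality, $\lVert G \rVert_2^2$ and $\lVert A \rVert_2^2$ respectively.

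Second, I would use the classical eigenvalue formula for Kronecker products: if $M u = \mu u$ and $N v = \nu v$, then $(M \otimes N)(u \otimes v) = (\mu \nu)(u \otimes v)$, and a dimension count shows that the collection $\{\mu_i \nu_j\}_{i,j}$ exhausts the spectrum of $M \otimes N$. Applied to $M = G^\top G$ and $N = A^\top A$, this gives
\begin{equation*}
    \lambda_{\max}\bigl((G^\top G) \otimes (A^\top A)\bigr) \;=\; \lambda_{\max}(G^\top G)\,\lambda_{\max}(A^\top A) \;=\; \lVert G \rVert_2^2\,\lVert A \rVert_2^2.
\end{equation*}
Combining the two displays,
\begin{equation*}
    \lVert G \otimes A \rVert_2^2 \;=\; \lambda_{\max}\bigl((G \otimes A)^\top (G \otimes A)\bigr) \;=\; \lVert G \rVert_2^2\,\lVert A \rVert_2^2,
\end{equation*}
and taking square roots (all quantities being nonnegative) yields the claim.

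There is essentially no genuine obstacle here: the proof is a two-line application of standard Kronecker identities, and the only care needed is to justify the eigenvalue formula for Kronecker products (which follows from a direct calculation on rank-one tensors $u \otimes v$ together with a counting argument showing that such tensors span the product space). If one prefers a self-contained, SVD-based route that avoids citing the eigenvalue formula, an equally short alternative is to form the SVDs $G = U_G \Sigma_G V_G^\top$ and $A = U_A \Sigma_A V_A^\top$ and use the mixed-product property to verify that $(U_G \otimes U_A)(\Sigma_G \otimes \Sigma_A)(V_G \otimes V_A)^\top$ is an SVD of $G \otimes A$, whose largest singular value is plainly $\sigma_{\max}(G)\sigma_{\max}(A)$.
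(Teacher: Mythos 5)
Your proof is correct. Note that the paper itself offers no proof of this proposition --- it is stated in the appendix as a standard fact about Kronecker products --- so there is no authorial argument to compare against; your derivation via the mixed-product property, the identity $(G\otimes A)^\top(G\otimes A)=(G^\top G)\otimes(A^\top A)$, and the eigenvalue formula for Kronecker products of symmetric positive semidefinite matrices is the standard justification and is sound (the step where $\lambda_{\max}$ of the Kronecker product equals the product of the $\lambda_{\max}$'s does rely on all eigenvalues being nonnegative, which you have since both factors are Gram matrices). The SVD-based alternative you sketch is equally valid and arguably cleaner, as it identifies the full singular value spectrum of $G\otimes A$ at once.
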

\begin{proposition}
    Let $A$ be a symmetric matrix. Then \[
    |\lambda_{\min}(A)| \leq \rho(A)
    \]
\end{proposition}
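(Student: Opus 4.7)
The plan is to reduce the inequality directly to the identification, stated earlier in the Additional Notation paragraph, of the spectral norm $\lVert A \rVert_2$ with the spectral radius $\rho(A)$ for symmetric $A$. Once that identification is in hand, the result is essentially the observation that every eigenvalue in absolute value is dominated by the operator norm.

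Concretely, first I would let $v \in \mathbb{R}^N$ be a unit eigenvector associated with $\lambda_{\min}(A)$, which exists since $A$ is symmetric and hence diagonalizable with real eigenvalues. Then $A v = \lambda_{\min}(A)\, v$, so taking Euclidean norms on both sides gives $\lVert A v \rVert_2 = |\lambda_{\min}(A)|$. By definition of the spectral norm as a supremum over unit vectors, $\lVert A v \rVert_2 \leq \lVert A \rVert_2$, hence $|\lambda_{\min}(A)| \leq \lVert A \rVert_2$. Finally, invoking the fact recalled in the Additional Notation paragraph that $\lVert A \rVert_2 = \rho(A)$ whenever $A$ is symmetric yields $|\lambda_{\min}(A)| \leq \rho(A)$, as required.

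There is essentially no obstacle here, since the only ingredient beyond the definitions is the standard identification $\lVert A \rVert_2 = \rho(A)$ for symmetric $A$, which the paper has already stated. The only subtle point worth flagging is a consistency check: the excerpt defines $\rho(A) = \max\{\lambda_1,\ldots,\lambda_N\}$ without absolute values, yet the identification with $\lVert A \rVert_2$ forces $\rho(A) = \max_i |\lambda_i(A)|$, so the inequality $|\lambda_{\min}(A)| \leq \rho(A)$ is in fact a tautology once one uses the spectral-norm characterisation. A one-line alternative would be to note $|\lambda_{\min}(A)| \in \{|\lambda_1(A)|,\ldots,|\lambda_N(A)|\}$ and hence is at most the maximum, which equals $\rho(A)$; I would include both framings so the reader can pick whichever definition of $\rho$ is being used.
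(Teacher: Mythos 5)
Your proof is correct; the paper states this proposition without proof, treating it as a standard fact, and your argument via the spectral-norm characterisation (or the one-line version using $\rho(A)=\max_i|\lambda_i(A)|$) is exactly the standard justification. Your consistency check is also well taken: with the paper's literal definition $\rho(A)=\max\{\lambda_1,\ldots,\lambda_N\}$ (no absolute values) the claim would fail for, e.g., $A=-I$, so the intended reading must be $\rho(A)=\max_i|\lambda_i(A)|=\lVert A\rVert_2$, under which the inequality is immediate.
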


\begin{lemma}
    Let $A, B \in \mathcal{M}_{m,n}\left(\mathbb{R}\right)$ such that  $ 0 \le A_{ij} \le B_{ij}$ for any $1 \le i \le m, \, 1 \le j \le n$. Then 
\begin{equation*}
\lVert A \rVert_2 \le \lVert B \rVert_2.
\end{equation*} \label{lem::operator_norm_ineq_for_nonnegative_matrices}
\end{lemma}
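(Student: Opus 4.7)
The plan is to use the variational characterization of the spectral norm, $\lVert M \rVert_2 = \sup_{\lVert x \rVert_2 = 1} \lVert Mx \rVert_2$, and to show that when $M$ has nonnegative entries the supremum is already attained over nonnegative unit vectors. Once this reduction is in place, the entrywise inequality $A \le B$ transfers directly to $\lVert Ax \rVert_2 \le \lVert Bx \rVert_2$ for every such $x$, and taking suprema finishes the argument.

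For the reduction step, I would fix an arbitrary $x \in \bbR^n$ and let $|x|$ denote the vector of componentwise absolute values $(|x_1|, \ldots, |x_n|)^\top$. Since $A_{ij} \ge 0$, the triangle inequality applied row by row gives
\[ |(Ax)_i| = \Bigl| \sum_{j=1}^{n} A_{ij} x_j \Bigr| \le \sum_{j=1}^{n} A_{ij} |x_j| = (A|x|)_i. \]
Squaring, summing over $i$, and using the trivial identity $\lVert x \rVert_2 = \lVert |x| \rVert_2$ yields $\lVert Ax \rVert_2 \le \lVert A |x| \rVert_2$. Consequently, the supremum in the variational definition of $\lVert A \rVert_2$ can be restricted to nonnegative unit vectors, and the analogous identity holds for $\lVert B \rVert_2$.

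For the comparison step, let $x \ge 0$ be a unit vector. Nonnegativity of $x$, $A$, and $B$, together with $A_{ij} \le B_{ij}$, gives $0 \le (Ax)_i \le (Bx)_i$ componentwise, and therefore $\lVert Ax \rVert_2 \le \lVert Bx \rVert_2$. Taking the supremum over nonnegative unit vectors on both sides yields
\[ \lVert A \rVert_2 = \sup_{\substack{\lVert x \rVert_2 = 1 \\ x \ge 0}} \lVert Ax \rVert_2 \le \sup_{\substack{\lVert x \rVert_2 = 1 \\ x \ge 0}} \lVert Bx \rVert_2 = \lVert B \rVert_2, \]
which is the claim.

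I do not foresee any substantive obstacle. The only step that requires a moment's thought is the reduction to nonnegative test vectors, which relies critically on nonnegativity of the matrix entries, since otherwise the triangle inequality bound $|(Ax)_i| \le (A|x|)_i$ would fail. The remainder is an elementary application of monotonicity of squaring on $[0,\infty)$ together with the variational formula for the spectral norm.
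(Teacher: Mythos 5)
Your proof is correct and follows essentially the same route as the paper: reduce the supremum in the variational formula to nonnegative unit vectors, then compare $\lVert Ax\rVert_2$ and $\lVert Bx\rVert_2$ componentwise for $x \ge 0$. In fact, your triangle-inequality argument $|(Ax)_i| \le (A|x|)_i$ explicitly justifies the reduction step that the paper's proof merely asserts, so your write-up is, if anything, slightly more complete.
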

\begin{proof}
Let $\mathbb{R}^n_{+}$ be the set of $n$-dimensional nonnegative vectors, i.e.~with nonnegative entries. As $A$ and $B$ have nonnegative entries, we deduce that
\begin{align*}
   \lVert A \rVert_2 =  \sup_{\bfx \in \mathbb{R}^n} \lVert A \bfx \rVert_2 = \sup_{\bfx \in \mathbb{R}^n_{+}} \lVert A \bfx \rVert_2, \\
    \lVert B \rVert_2 =  \sup_{\bfx \in \mathbb{R}^n} \lVert B \bfx \rVert_2 = \sup_{\bfx \in \mathbb{R}^n_{+}} \lVert B \bfx \rVert_2.
\end{align*}
Further, for any $\bfx \in \mathbb{R}^n_{+}$, it holds that
\begin{equation*}
     \lVert B \bfx \rVert_2  = \lVert (B-A) \bfx + A \bfx \rVert_2 \ge \lVert A \bfx \rVert_2, 
\end{equation*}
as $(B-A)\bfx$ is nonegative. This increases the norm of any nonnegative vector $A\bfx$. Taking supremum in the above equation, we obtain that
\begin{equation*}
    \sup_{\bfx \in \mathbb{R}^n} \lVert A \bfx \rVert_2 =  \sup_{\bfx \in \mathbb{R}_{+}^n} \lVert A \bfx \rVert_2  \le \sup_{\bfx \in \mathbb{R}_{+}^n} \lVert B \bfx \rVert_2 =  \sup_{\bfx \in \mathbb{R}^n} \lVert B \bfx \rVert_2 .
\end{equation*}
\end{proof}
\newpage
\section{Proof of Lemma \ref{lem::qld-spectral-radius}}
\label{app::proof1}

The proof takes the following steps. We first decompose the derivative of the pseudo-gradient, which we dub the \emph{pseudo-jacobian}, of the regularised game $\clG^H$ into a term involving payoffs and a term involving exploration rates. In doing so, we can determine how the exploration rates should be balanced so that the pseudo-jacobian is positive definite, which yields monotonicity of the game. We then further decompose the payoff term into terms involving the payoff matrices and the network adjacency matrix. This exposes the connection between each of the three quantities: exploration rate, payoff matrices and network connectivity.

\begin{proof}[Proof of Lemma \ref{lem::qld-spectral-radius}]
    Let $F$ be the pseudo-gradient of the regularised game $\clG^H$. We define the pseudo-jacobian as the derivative of $F$, given by
    \[
    [J(\bfx)]_{k, l} = D_{\bfx_l} F_k(\bfx)
    \]
    It holds that if $\frac{J(\bfx) + J^\top(\bfx)}{2}$ is positive definite for all $\bfx \in
    \Delta$ then $F(\bfx)$ is monotone. We decompose $J$ as 
    \begin{equation*}
        J(\bfx) = D(\bfx) + N(\bfx),
    \end{equation*}
    where $D(\bfx)$ is a block diagonal matrix with $-D^2_{\bfx_k \bfx_k} u_k^H(\bfx_k, \bfx_{-k})$ along the
    diagonal. $N(\bfx)$ is an off-diagonal block matrix with
    \begin{equation*}
        [N(\bfx)]_{k, l} = \begin{cases}
            - D_{\bfx_k, \bfx_l} u_k^H(\bfx_k, \bfx_{-k}) &\text{ if } (k, l) \in \clE \\
            \mathbf{0} &\text{ otherwise}
        \end{cases}.
    \end{equation*}
    Now notice that $-u_k^H(\bfx_k, \bfx_{-k}) = T_k \langle \bfx_k, \ln \bfx_{-k} \rangle - \sum_{l:(k, l) \in \clE} \bfx_k^{\top} A^{kl} \bfx_l$. Therefore, $D(x)$ is simply the Hessian of the entropy regularisation term $T_k \langle \bfx_k, \ln \bfx_k \rangle$. From Proposition \ref{prop::strong-conv}, it holds then that $D(\bfx)$ is strongly positive definite with constant $T = \min_k T_k$. Let $\bar{J}(\bfx)$ be defined as
    \begin{equation*}
        \Bar{J}(\bfx) = D(\bfx) + \frac{N(\bfx) + N^\top(\bfx)}{2}.
    \end{equation*}
    In words, $\Bar{J}(\bfx)$ is the symmetric component of $J(\bfx)$. We may now use Weyl's inequality to write
    \begin{align*}
        \lambda_{\min}(\Bar{J}(\bfx)) &\geq \lambda_{\min}(D(\bfx)) + \lambda_{\min}\left(
        \frac{N(\bfx) + N(\bfx)^\top}{2}
        \right) \\
        &\geq T - \rho \left(\frac{N(\bfx) + N(\bfx)^\top}{2} \right) \\
        &= T - \frac{1}{2} \lVert N(\bfx) + N(\bfx)^\top \rVert_2 \\
    \end{align*}

    To determine $\lVert N(\bfx) + N(\bfx)^\top \rVert_2$, we first notice that, in network polymatrix games each block of $N(\bfx)$ is given by
    \[
    [N(\bfx)]_{k, l} = \begin{cases}
        -A^{kl} &\text{ if } (k, l) \in \clE \\
        \mathbf{0} \text{ otherwise.}
    \end{cases}
    \]
    whilst
    \[
    [N(\bfx)^\top]_{k, l} = \begin{cases}
        -(A^{lk})^\top &\text{ if } (k, l) \in \clE \\
        \mathbf{0} \text{ otherwise.}
    \end{cases}
    \]
    To write this in the form of a kronecker product, we leverage Assumption \ref{ass::bimatrix-network}, namely that each edge corresponds to the same bimatrix game with payoff matrices $(A, B)$. 
    We decompose each edge into a half-edge along which $A$ is played, and a half-edge along which $B$ is played. 
    In doing so, we may decompose the adjacency matrix $G = G_{k \rightarrow l} + G_{l \rightarrow k}$. The non-zero elements of $G_{k \rightarrow l}$ correspond to half edges along which $A$ is played, $G_{l \rightarrow k}$ denote the half-edges along which $B$ is played. With this definition in place, we may write
    \[
    N(\bfx) + N(\bfx)^\top = -(A + B^\top) \otimes G_{k \rightarrow l} - (A^\top + B) \otimes G_{l \rightarrow k}
    \]
    Then, by Proposition \ref{prop::kron-two-norm}
    \begin{align*}
    \frac{1}{2} \lVert N(\bfx) + N(\bfx)^\top \rVert_2 &= \frac{1}{2} \lVert (A + B^\top) \otimes G_{k \rightarrow l} + (A^\top + B) \otimes G_{l \rightarrow k} \rVert_2 \\
    &\leq \frac{1}{2} \lVert (A + B^\top) \otimes G_{k \rightarrow l} \rVert_2 + \frac{1}{2} \lVert (A^\top + B) \otimes G_{l \rightarrow k} \rVert_2 \\
    &=\frac{1}{2} \lVert A + B^\top\rVert_2 (\lVert G_{k \rightarrow l}\rVert_2 + \lVert G_{l \rightarrow k}\rVert_2) \\
    &\leq \lVert A + B^\top\rVert_2 \lVert G \rVert_2 \\
    &= \delta_I \rho(G)
    \end{align*}
    where in the final inequality, we use Lemma \ref{lem::operator_norm_ineq_for_nonnegative_matrices}. From this we may establish that
    \begin{align*}
        \lambda_{\min}(\Bar{J}(x)) & \geq T - \delta_I \rho(G)
    \end{align*}

    Therefore, if $T > \delta_I \rho_G$, the pseudo-jacobian is strongly positive definite with constant $T - \delta_I \rho(G)$, from which it is established that $\clG^H$ is strongly monotone with the same constant. From Lemma \ref{lem::unique-qre} \cite{melo:qre} the QRE is unique, and from Lemma \ref{lem::ql-conv}, it is asymptotically stable under (\ref{eqn::QLD}).
\end{proof}

\newpage
\section{Proofs of Lemmas \ref{lem::er-spectral-radius} and \ref{lem::sbm-spectral-radius}}
\label{app::proof2}
We now focus on establishing upper bounds on the spectral radius when the network is drawn from the \er\ or Stochastic Block models. The proof idea is to decompose $G$ into $\bbE[G]$ and $\Tilde{G} = 
 G - \bbE[G]$. Then, $\rho(\bbE[G])$ is deterministic and can be computed in closed form. For $\rho(\Tilde{G})$, we require Lemma \ref{lemma_we_need}, which relies on Bernstein's matrix inequality \citep{tropp}. 
 
 We structure this appendix as follows: first list the required results (Lemmas \ref{lemma:bernstein} and \ref{lemma_we_need} and Corollary \ref{corollary_bern}), then give proofs for Lemmas \ref{lem::er-spectral-radius} and \ref{lem::sbm-spectral-radius} and then the proof for the key Lemma \ref{lemma_we_need}. Finally, we comment on the possibility to extend these results to the dependent entry case.
 
\begin{lemma}[Bernstein] Let $Y^{(1)},\ldots,Y^{(K)}$ be independent and symmetric $N \times N$ random matrices with zero-mean entries, i.e., $\bbE [Y^{(k)}_{ij}]$ = 0 almost surely for any $1 \le i,j \le N$ and $1 \le k \le K$. Define 
\begin{align*}
    \tilde{G} =& Y^{(1)} + \cdots + Y^{(K)},\\
    v^2 =& \rho\left(\bbE[\tilde{G}^2]\right) =  \rho \left(\bbE[(Y^{(1)})^2+\cdots + (Y^{(K)})^2]\right). 
\end{align*}
If $\rho(Y^{(k)}) \le L$ for all $ 1 \le k \le K$, then the following holds for any $t > 0$
\begin{equation}
    P(\rho( \tilde{G}) > t) \le 2N \exp{\left(\frac{-t^2/2}{v^2 + Lt/3}\right)}.\label{eq:bernstein}
\end{equation}
\label{lemma:bernstein}
\end{lemma}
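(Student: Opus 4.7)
The plan is to follow the matrix Laplace transform approach originally due to Ahlswede--Winter and sharpened by Tropp. Since $\tilde{G}$ is symmetric, one has $\rho(\tilde{G}) = \max(\lambda_{\max}(\tilde{G}), \lambda_{\max}(-\tilde{G}))$, so a union bound reduces the task to controlling $P(\lambda_{\max}(\tilde{G}) > t)$; applying the same argument to $-\tilde{G}$ supplies the factor of $2$ in the final estimate. The entry point is the matrix Chernoff inequality: for any $\theta > 0$, Markov's inequality combined with the pointwise bound $e^{\theta \lambda_{\max}(\tilde{G})} \le \mathrm{tr}\exp(\theta \tilde{G})$ yields
\[
P(\lambda_{\max}(\tilde{G}) > t) \le e^{-\theta t}\,\bbE[\mathrm{tr}\exp(\theta \tilde{G})].
\]

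The main obstacle is controlling this matrix moment generating function, because matrix exponentials do not factor over independent sums as scalar ones do. I would resolve this via Lieb's concavity theorem, which produces the so-called master tail bound
\[
\bbE[\mathrm{tr}\exp(\theta \tilde{G})] \le \mathrm{tr}\exp\!\left(\sum_{k=1}^{K} \log \bbE[\exp(\theta Y^{(k)})]\right).
\]
Each summand is then bounded by a Taylor expansion. Using $\bbE[Y^{(k)}] = 0$, $\rho(Y^{(k)}) \le L$, and the operator inequality $(Y^{(k)})^{r} \preceq L^{r-2}(Y^{(k)})^{2}$ for $r \ge 2$, one obtains
\[
\log \bbE[\exp(\theta Y^{(k)})] \preceq g(\theta)\,\bbE[(Y^{(k)})^{2}], \qquad g(\theta) \defeq \frac{\theta^{2}/2}{1 - L\theta/3},
\]
valid for $0 < \theta < 3/L$. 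Summing in the Loewner order and using that $\mathrm{tr}\exp$ is monotone with respect to $\preceq$, together with $\mathrm{tr}\exp(M) \le N\exp(\lambda_{\max}(M))$, gives $\bbE[\mathrm{tr}\exp(\theta \tilde{G})] \le N\exp(g(\theta)\,v^{2})$, where we have used that $\lambda_{\max}(\sum_k \bbE[(Y^{(k)})^2]) = \lambda_{\max}(\bbE[\tilde{G}^2]) = v^2$ by independence and zero mean.

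Combining the pieces yields $P(\lambda_{\max}(\tilde{G}) > t) \le N \exp(-\theta t + g(\theta)\,v^{2})$ for every admissible $\theta$. A direct scalar optimization shows that the choice $\theta = t/(v^{2} + Lt/3)$ is feasible and minimizes the exponent, producing precisely $-t^{2}/(2(v^{2} + Lt/3))$; doubling to account for the $-\tilde{G}$ tail then delivers the stated inequality. I expect Lieb's theorem to be the only genuinely nontrivial input: the remaining steps are algebraic manipulations that lift the classical scalar Bernstein optimization to the matrix setting, and they succeed because both the Loewner order and the trace exponential behave well under the dominance hypotheses on $Y^{(k)}$.
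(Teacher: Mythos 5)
Your sketch is correct, but note that the paper does not prove this lemma at all: it is imported verbatim from the cited reference (Tropp's matrix Bernstein inequality), and your argument is a faithful reconstruction of exactly that source's standard proof via the Laplace-transform method, Lieb's concavity theorem, and the Bernstein moment-generating-function bound $\log\bbE[\exp(\theta Y^{(k)})]\preceq g(\theta)\,\bbE[(Y^{(k)})^2]$. The only cosmetic inaccuracy is calling $\theta=t/(v^2+Lt/3)$ the minimizer of the exponent --- it is merely a feasible, near-optimal choice --- but since the bound holds for every $\theta\in(0,3/L)$ this does not affect the conclusion.
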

\begin{corollary}We will use a different form of the above inequality, which follows by setting the RHS of \eqref{eq:bernstein} to $\epsilon$ and solving for $t$ in terms of $\epsilon$. For any $\epsilon > 0,$ it holds that
\begin{equation*}
    \rho( \tilde{G}) \le \sqrt{2 v^2 \log{\frac{2N}{\epsilon}}} + \frac{2L}{3}\log{\frac{2N}{\epsilon}} \text{ with probability at least } 1-\epsilon.
\end{equation*} 
\label{corollary_bern}
\end{corollary}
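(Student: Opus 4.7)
The plan is to invert Bernstein's inequality in Lemma~\ref{lemma:bernstein} by choosing $t = t^*(\epsilon)$ so that the right-hand side of \eqref{eq:bernstein} is at most $\epsilon$, and to exhibit a clean closed-form $t^*$ that admits such a bound. Since the right-hand side $2N\exp\!\left(\tfrac{-t^2/2}{v^2+Lt/3}\right)$ is strictly decreasing in $t>0$, it suffices to show that the stated value
\[
t^* \;\defeq\; \sqrt{2v^2 \log\tfrac{2N}{\epsilon}} + \tfrac{2L}{3}\log\tfrac{2N}{\epsilon}
\]
satisfies $2N\exp\!\left(\tfrac{-(t^*)^2/2}{v^2+Lt^*/3}\right)\le \epsilon$, whence $P(\rho(\tilde{G}) > t^*)\le \epsilon$, and the corollary follows by taking the complementary event.

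To see that $t^*$ works, I would first write the equality case. Setting the bound equal to $\epsilon$ and taking logarithms gives
\[
\frac{t^2/2}{v^2 + Lt/3} \;=\; \log\tfrac{2N}{\epsilon} \;\eqdef\; u,
\]
i.e., the quadratic $t^2 - \tfrac{2uL}{3} t - 2 u v^2 = 0$. The positive root is
\[
t_{\mathrm{exact}} \;=\; \tfrac{uL}{3} + \sqrt{\tfrac{u^2 L^2}{9} + 2 u v^2}.
\]
Now the key inequality is the standard $\sqrt{a+b}\le \sqrt{a}+\sqrt{b}$ for $a,b\ge 0$, applied to $a=\tfrac{u^2L^2}{9}$ and $b=2uv^2$, which yields
\[
t_{\mathrm{exact}} \;\le\; \tfrac{uL}{3} + \tfrac{uL}{3} + \sqrt{2uv^2} \;=\; t^*.
\]
Since $t^* \ge t_{\mathrm{exact}}$ and the Bernstein bound is monotone decreasing in $t$, the bound $\epsilon$ (achieved at $t_{\mathrm{exact}}$) is only further improved at $t^*$, so $P(\rho(\tilde{G}) > t^*)\le P(\rho(\tilde{G}) > t_{\mathrm{exact}}) = \epsilon$.

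This is purely an algebraic manipulation, so there is no real obstacle; the only step with any subtlety is making sure the inversion produces an \emph{upper} bound (not the exact root), which is exactly what the subadditivity of the square root delivers. Rearranging into ``with probability at least $1-\epsilon$'' by taking the complement finishes the argument.
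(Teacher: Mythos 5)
Your proposal is correct and follows essentially the same route as the paper: set the Bernstein bound equal to $\epsilon$, solve the resulting quadratic for the positive root $t_{\mathrm{exact}}$, and loosen it to the stated form via $\sqrt{a+b}\le\sqrt{a}+\sqrt{b}$. Your explicit appeal to monotonicity of the tail bound in $t$ is a small but welcome tightening of the argument that the paper leaves implicit.
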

\begin{lemma}\label{lemma_we_need}
    Let $G$ be the adjacency matrix of a random graph with $0$s on the main diagonal and with independent entries $[G]_{ij}\sim \texttt{Bernoulli}(p_{ij})$, apart from $[G]_{ij}$ and $[G]_{ji}$, which are equal. Let $\tilde{G} = G - E[G]$ with entries $[\tilde{G}]_{ij} = [G]_{ij} - p_{ij}$. Then $\tilde{G}$ satisfies Corollary \ref{corollary_bern} with 
    \begin{align*}
        v^2 =& \max_{1 \le i \le N} \sum_{j \neq i} p_{ij} (1-p_{ij}) \\ 
        L   =& \max_{1 \le i < j \le N} \{p_{ij},1-p_{ij}\} \le 1.
    \end{align*}
For simplicity, as $L$ only appears in the non-dominant term, we use $L=1$. In the ER case, this amounts to
\begin{equation*}
    \rho(\tilde{G}) \le \sqrt{2(N-1)\, p \, (1-p) \log{\frac{2N}{\epsilon}}  } + \frac{2}{3} \log{\frac{2N}{\epsilon}} \text{ with probability at least } 1 - \epsilon.
\end{equation*}
In the SBM case, this amounts to
\begin{equation*}
    \rho(\tilde{G}  ) \le \sqrt{2 \left( (N-N/C)  q (1-q) + (N/C-1) \sigma_{\text{p,max}}^2 \right) \log{\frac{2N}{\epsilon}}} + \frac{2}{3} \log{\frac{2N}{\epsilon}}
\end{equation*}
with probability at least $1 - \epsilon$ and where $\sigma_{\text{p,max}} = \max_{1 \le l \le C} \sqrt{p_l(1-p_l)}$.
\end{lemma}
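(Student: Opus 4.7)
The plan is to apply Bernstein's matrix inequality (Lemma \ref{lemma:bernstein}) after expressing $\tilde{G}$ as a sum of independent symmetric matrices, one per edge. Specifically, I would define for each pair $1 \le i < j \le N$ the matrix $Y^{(ij)}$ whose only nonzero entries are $[Y^{(ij)}]_{ij} = [Y^{(ij)}]_{ji} = [G]_{ij} - p_{ij}$, so that
\begin{equation*}
\tilde{G} = \sum_{1 \le i < j \le N} Y^{(ij)}.
\end{equation*}
These summands are independent (the independence of $\{[G]_{ij} : i < j\}$ is assumed), symmetric, and zero-mean. This puts us in the setting of Bernstein's inequality.

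Next I would verify the two hypotheses needed for Corollary \ref{corollary_bern}. For the uniform bound $L$, each $Y^{(ij)}$ has rank two with eigenvalues $\pm |[G]_{ij} - p_{ij}|$, so $\rho(Y^{(ij)}) \le \max\{p_{ij}, 1-p_{ij}\} \le 1$, which gives $L$ and justifies replacing it by $1$ for the leading-order analysis. For the variance proxy $v^2 = \rho(\bbE[\tilde{G}^2])$, I would compute $[\tilde{G}^2]_{ik} = \sum_j [\tilde{G}]_{ij}[\tilde{G}]_{jk}$ and take expectations. For $i \ne k$ and any $j$, either $j \in \{i,k\}$ (in which case one factor is zero because the diagonal of $\tilde{G}$ vanishes) or $j \notin \{i,k\}$ (in which case $[\tilde{G}]_{ij}$ and $[\tilde{G}]_{jk}$ are independent zero-mean random variables), so all off-diagonal entries of $\bbE[\tilde{G}^2]$ vanish. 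For the diagonal, $\bbE\bigl[[\tilde{G}^2]_{ii}\bigr] = \sum_{j \ne i} \mathrm{Var}([G]_{ij}) = \sum_{j \ne i} p_{ij}(1-p_{ij})$. Since $\bbE[\tilde{G}^2]$ is diagonal with nonnegative entries, its spectral radius is the largest diagonal entry, which yields the stated $v^2$.

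Substituting these $L$ and $v^2$ values into Corollary \ref{corollary_bern} immediately yields the abstract bound
\begin{equation*}
\rho(\tilde{G}) \le \sqrt{2 v^2 \log{\tfrac{2N}{\epsilon}}} + \tfrac{2}{3}\log{\tfrac{2N}{\epsilon}}
\end{equation*}
with probability at least $1-\epsilon$. To get the two concrete formulas, I would just instantiate $v^2$. In the ER case $p_{ij} \equiv p$, so $\sum_{j \ne i} p_{ij}(1-p_{ij}) = (N-1)p(1-p)$ for every $i$. In the SBM case, a node $i$ in community $c$ has $N/C - 1$ neighbours in $c$ contributing $p_c(1-p_c)$ each and $N - N/C$ neighbours outside contributing $q(1-q)$ each, so $\sum_{j \ne i} p_{ij}(1-p_{ij}) = (N/C-1)\,p_c(1-p_c) + (N-N/C)\,q(1-q)$. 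Maximising over $c$ replaces $p_c(1-p_c)$ by $\sigma_{p,\max}^2$, which gives the claimed expression.

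The main obstacle, and the only step that needs care beyond bookkeeping, is the off-diagonal cancellation in $\bbE[\tilde{G}^2]$: one must handle the three cases $j = i$, $j = k$, and $j \notin \{i,k\}$ separately and use both the zero-diagonal convention and the pairwise independence of distinct upper-triangular entries. Once that is done, the rest is a direct application of Bernstein plus two short combinatorial substitutions.
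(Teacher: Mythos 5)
Your proposal is correct and follows essentially the same route as the paper: the identical rank-two decomposition $\tilde{G}=\sum_{i<j}Y^{(ij)}$, the same eigenvalue computation giving $L\le 1$, and the same variance proxy $v^2$ (you compute $\bbE[\tilde{G}^2]$ directly, which the paper explicitly notes is equivalent to summing $\bbE[(Y^{ij})^2]$). The ER and SBM instantiations also match, so nothing further is needed.
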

Proofs for Corollary \ref{corollary_bern} and Lemma \ref{lemma_we_need} can be found at the end of the section.
 \begin{proof}[Proof of Lemma \ref{lem::er-spectral-radius}]
For the deterministic part, we have that  
\begin{equation}
\bbE[G] =
\begin{pmatrix}
0 & p & p & \cdots & p \\
p & 0 & p & \cdots & p \\
p & p & 0 & \cdots & p \\
\vdots & \vdots & \vdots & \ddots & \vdots \\
p & p & p & \cdots & 0
\end{pmatrix} = p J_N - p I_N, \label{eq:det_part}
\end{equation}
with eigenvalues $\lambda_1 = (N-1)p$ and $\lambda_2 = \ldots = \lambda_N = -p$. Hence $\rho(\bbE[G]) = (N-1)p$. Further, by applying Lemma~\ref{lemma_we_need}, we obtain that with probability at least $1-\epsilon$
    \begin{equation*}
        \rho(G) \leq \rho(\bbE[G]) + \rho\left(\tilde{G}\right) \le (N-1)p + \sqrt{2(N-1)\, p \, (1-p) \log{\frac{2N}{\epsilon}}  } + \frac{2}{3} \log{\frac{2N}{\epsilon}}.
    \end{equation*}
\end{proof}
\begin{proof}[Proof of Lemma \ref{lem::sbm-spectral-radius}]
Let $I_N$ be the identity matrix and $J_N$ be the matrix whose elements are all ones. For the deterministic part, we have that  
\[
\bbE[G] = 
\begin{pmatrix}
P_1 & Q & Q & \cdots & Q \\
Q & P_2 & Q & \cdots & Q \\
Q & Q & P_3 & \cdots & Q \\
\vdots & \vdots & \vdots & \ddots & \vdots \\
Q & Q & Q & \cdots & P_C
\end{pmatrix},
\]
where $Q = q J_N$ and $P_l = pJ_m - p I_m$ for $1 \le l \le C$, as in \eqref{eq:det_part}. Thus
\begin{align*}
    \rho\left(\bbE[G]\right) \leq \lVert\bbE[G]\rVert_1 = & (N-N/C)q+\max_{1 \leq l \leq C}{\{(N/C-1)p_l\}} = (N-N/C)q + (N/C-1)p_{\text{max}} \\
     = &Nq + N(p_{\text{max}}-q)/C-p_{\text{max}},
     \end{align*}
where $p_{\text{max}} = \max_{1\leq l \leq C} p_l$. Let $ \sigma_{\text{p,max}}= \max{\{\sqrt{p_1(1-p_1)}, \ldots,  \allowbreak \sqrt{p_C(1-p_C)}\}}$. By applying Lemma \ref{lemma_we_need} for $\tilde{G}$ and combining the two upper bounds, we obtain that 
\begin{align*}
   \rho(G) \leq \rho(\bbE[G]) + \rho\left(\tilde{G}\right) & \leq  (N-N/C)q + (N/C-1)p_{\text{max}} + \\
   &  \sqrt{2 \left( (N-N/C)  q (1-q) + (N/C-1) \sigma_{\text{p,max}}^2 \right) \log{\frac{2N}{\epsilon}}} + \frac{2}{3} \log{\frac{2N}{\epsilon}}.
\end{align*}
\end{proof}
\begin{proof}[Proof of Corollary \ref{corollary_bern}] 
Let $\epsilon = 2N \exp{\left(\frac{-t^2/2}{v^2 + Lt/3}\right)}$. Rearranging, we get a unique solution 
\begin{equation*}
    t = \frac{L}{3}\log\frac{2N}{\epsilon} + \sqrt{\frac{L^2}{9}\left(\log\frac{2N}{\epsilon}\right)^2 + 2v^2\log\frac{2N}{\epsilon}}
\end{equation*}
We thus have that for any $\epsilon >0$
\begin{equation*}
    P \left(\rho(A) > \frac{L}{3}\log\frac{2N}{\epsilon} + \sqrt{\frac{L^2}{9}\left(\log\frac{2N}{\epsilon}\right)^2 + 2v^2\log\frac{2N}{\epsilon}} \right) \le 1 -\epsilon.
\end{equation*}
For simplicity, we use the slightly looser bound
\begin{equation*}
    P\left (\rho(A) > \frac{2L}{3}\log{\frac{2N}{\epsilon}} + \sqrt{2 v^2 \log{\frac{2N}{\epsilon}}} \right)  \le 1- \epsilon,
\end{equation*}
obtained by applying $\sqrt{a^2 + b} < a  + \sqrt{b}$ for $a,b \ge 0,$ to the previous inequality.
\end{proof}
\begin{proof}[Proof of Lemma \ref{lemma_we_need}]
    Notation-wise, let $e_i$ be the canonical basis vectors for $1 \le i \le N$. For each $1 \le i < j \le N$, define $Y^{ij} = [\tilde{G}]_{ij} (e_i e_j^T + e_j e_i^T)$, which keeps the $\text{ij}^{th}$ and $\text{ji}^{th}$ elements of $\tilde{G}$ and sets everything else to $0$. Then 
    \begin{equation*}
        \tilde{G} = \sum_{1 \le i < j \le N} Y^{ij}.
    \end{equation*}
    By definition, the matrices $Y^{ij}$ are independent, symmetric random matrices and $\bbE [Y^{ij}] =0$ for $1 \le i < j \le N$. We now determine $L$ and $v^2$. 

    For $L$, note that $Y_{ij}$ has two non-zero eigenvalues, i.e., $[\tilde{G}]_{ij} = [G]_{ij} - p_{ij}$ and $-[\tilde{G}]_{ij} = - [G]_{ij} + p_{ij}$ with corresponding eigenvectors $e_i+e_j$ and $e_i-e_j$, respectively. Thus,  $\rho(Y^{ij})$ can only take the values $p_{ij}$ or $1-p_{ij}$ and $L$ can be chosen as $\max_{1 \le i < j \le N}\{p_{ij},1-p_{ij}\}$, or simpler, $1$. 

    For $v^2$, note that $(Y^{ij})^2 = [\tilde{G}]_{ij}^2(e_i e_i^T + e_j e_j^T) = ([G]_{ij}-p_{ij})^2(e_i e_i^T + e_j e_j^T)$ , which is a diagonal matrix, and further $\bbE[(Y^{ij})^2] = p_{ij}(1-p_{ij}) (e_i e_i^T + e_j e_j^T)$.  Thus the matrix 
    \begin{equation*}
        \sum_{1 \le i < j \le N}\bbE[(Y^{ij})^2]
    \end{equation*}
    is diagonal, with the $\text{ii}^{th}$ entry given by $\sum_{j \neq i} p_{ij}(1-p_{ij})$, hence $v^2 = \max_{1 \le i \le N} \sum_{j \neq i} p_{ij} (1-p_{ij})$. Alternatively, one can compute $\bbE[\tilde{G}^2]$ directly and notice that the cross-diagonal entries are exactly $0$ by the independence of the $Y^{ij}$ matrices for $1 \le i < j \le N$. 

    We now specialize to the ER and SBM cases. In the ER case, $\max_{1 \le i \le N} \sum_{j \neq i} p_{ij} (1-p_{ij})$ is independent of $i$ and equals $(N-1)p$. In the SBM case, we obtain a summation similar to the one from the proof of Lemma $\ref{lem::sbm-spectral-radius}$. Specifically, on the $i^{th}$ row we have $N/C-1$ blocks of size $C$ where the edge connection is $q$, and one block of size $N/C$ where the edge connection is $p_i$. Thus 
    \begin{equation*}
        v^2 = \max_{1 \le l \le C}  (N-N/C) q(1-q) + (N/C-1) p_l(1-p_l).
    \end{equation*}
    \end{proof}

\textbf{Dependent entry case. } So far, we focused on random network models with independent entries, i.e., the ER and SB models. Extensions to the case of dependent entries are possible as long as we can impose constraints on the covariance between entries, by adopting a modified version a Bernstein's inequality, e.g., see \cite{weak_exp_dep}.
\newpage
\section{Assignment of payoffs to edges} \label{app::undirected}

In this section, we clarify the assignment of payoff matrices $(A, B)$ to each edge. In particular, we begin with an \emph{undirected graph} $(\clN, \clE)$, in particular, one with a \emph{symmetric} adjacency matrix $G$. For each edge $(k, l) \in \clE$, we randomly assign either $A$ to $k$ and $B$ to $l$ or vice versa. 

As an example, consider a 3 player network game where the adjacency matrix is given by
$$ G = \begin{pmatrix} 0&1&0 \\ 1&0&1 \\ 0&1&0 \end{pmatrix}$$

From Assumption 1, it must be that both edges $(1, 2), (2, 3)$ are assigned the same payoff matrices $(A, B)$. One example is to assign the edge $(1, 2)$ to the payoffs $(B, A)$ and $(2, 3)$ to $(B, A)$. This yields the following payoffs for each agent
$$u_1 = x_1^\top B x_2$$
$$u_2 = x_2^\top A x_1 + x_2^\top B x_3$$
$$u_3 = x_3^\top A x_2.$$
Another choice is to assign the edge $(1, 2)$ to the payoffs $(A, B)$ and $(2, 3)$ to $(A, B)$.
$$u_1 = x_1^\top A x_2$$
$$u_2 = x_2^\top B x_1 + x_2^\top A x_3$$
$$u_3 = x_3^\top B x_2.$$
Finally, there is also the option to assign to the edge $(1, 2)$ the payoffs $(A, B)$ and $(2, 3)$ to $(B, A)$. This gives the payoffs
$$u_1 = x_1^\top A x_2$$
$$u_2 = x_2^\top B x_1 + x_2^\top B x_3$$
$$u_3 = x_3^\top A x_2.$$

Notice that, in any case, the underlying graph remains undirected, whilst the payoffs are randomly assigned and fixed throughout the game. This process can be conceptualised as randomly assigning a directed payoff matrix to each half-edge. Specifically, for each undirected edge $(k,l)$, we assign matrix $A$ to the directed half-edge $k \rightarrow l$ and matrix $B$ to the directed half-edge $l \rightarrow k$. Theorems \ref{thm::er-convergence} and \ref{thm::sbm-convergence} hold regardless of the ordering of the half-edges.
\newpage
\section{Further simulation results}
\label{appedinx_further_sim_results}

\subsection{Simulation setup}
\label{appendix:simulation_setup}
\textbf{Q-Learning hyperparameters} In all simulations, we iterate the Q-Learning algorithm defined in Equations \ref{eqn::qld_update} and \ref{eqn::boltzmann-selection}. We use learning rate $\alpha_k = 0.1$ for all agents $k$. As shown in \cite{tuyls:qlearning}, the $\alpha_k$ parameter amounts to a time rescaling in the continuous-time dynamic (\ref{eqn::QLD}), so long as all agents use the same (constant) learning rate.

\textbf{Numerical convergence} In all simulations (e.g., Figure \ref{fig::er-heatmaps}), we must evaluate numerically whether a Q-learning trajectory has converged or not. To this end, we run Q-learning for $4000$ steps and analyse the last $300$ steps. For these steps $300$, to which we will refer as the \emph{trajectory}, we compute (i) the variance of each component of the trajectory and take the mean across components, and (ii) the componentwise relative difference, defined as the maximum across components of
\begin{equation*}
    \frac{\max(\textrm{trajectory}) - \min(\textrm{trajectory})}{\max(\textrm{trajectory})}
\end{equation*}
We consider a trajectory to have converged if the mean variance is below $10^{-2}$ and relative difference below $10^{-5}$. In Figures \ref{fig::sbm-histograms} and \ref{fig::histogram_250}, we plot on the x-axis the maximum component of the absolute difference, defined for each agent $k$ as
\begin{equation*}
 \max_{i \in \clA_k} \left\{\max(\textrm{trajectory}_{ki}) - \min(\textrm{trajectory}_{ki})\right\},
\end{equation*}
where $\textrm{trajectory}_{ki}$ refers to the mixed strategy of action $i$ for agent $k$.

\textbf{Computational resources}
All experiments were run on an AMD Rome CPU cluster node with 128 cores and 2 GHz clock. 

\subsection{Additional simulations} \label{appendix:extended_simulation_study}
Figure \ref{fig::smaller_n_er} expands on Figure \ref{fig::er-heatmaps} by analysing network games with networks generated from the \er\ model, with the number of agents $N$ varying from 15 to 50 and the edge probability $p$ over a wider range $(0.05,1)$. By examining the boundary between convergent (blue) and divergent (yellow) regions, we find that convergent behaviours persist for low exploration rates only if $p$ is small. By contrast, for large $p$ (dense networks), the boundary shifts rapidly as $N$ increases. 

Figure \ref{fig::sapp-sbm-small-p} extends Figure \ref{fig::sbm-heatmaps} by illustrating the convergence of (\ref{eqn::QLD}) in Network Sato, Shapley, and Conflict games. Additionally, Figure \ref{fig::smaller_n_sbm} explores the impact of varying the number of agents from $N=15$ to $N=50$ and using the full range of $p\in(0.05,1)$. This analysis shows that both $p$ and $q$ influence the location of the boundary separating convergent and non-convergent behaviours. This finding is notable because it also applies to the Network Conflict game, which lies outside the scope of Assumption \ref{ass::bimatrix-network}

Finally, Figure \ref{fig::histogram_250} expands on Figure \ref{fig::sbm-histograms} by simulating Q-Learning on a Network Sato game with 250 agents and 1024 initial conditions. The results show that communities with low probability $p_c$ of intra-community edges exhibit less variation in the final 300 iterations compared to those with high $p_c$. This suggests that convergence is possible with lower exploration rates $T_k$ provided that heterogeneous exploration rates are employed, i.e., allowing for $T_k$ to be a function of $p_c$.

\begin{figure}[t]
    \centering
    \vspace{-1cm}
    \rotatebox{-90}{\includegraphics[width=0.95\textheight]{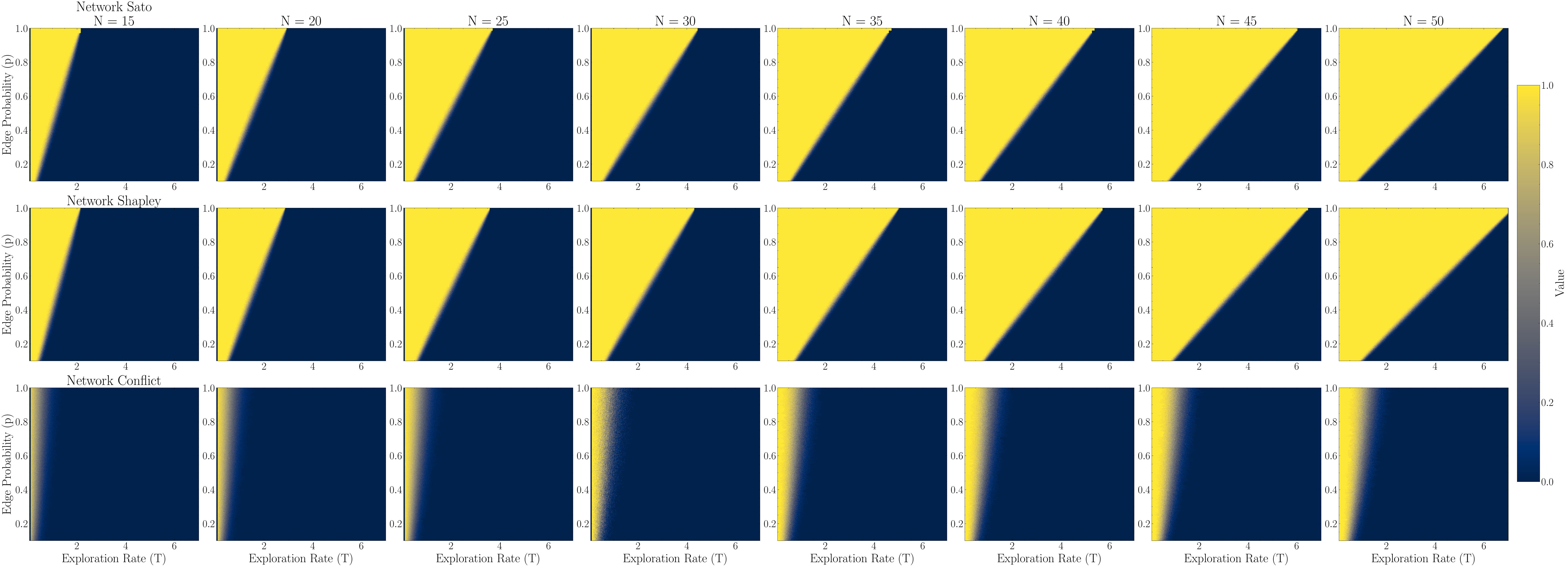}}
    \caption{Proportion of (\ref{eqn::QLD}) simulations that diverge in Network Sato, Shapley and Conflict games with networks drawn from the \er\ model, varying exploration rates $T \in (0.05,7)$, edge probability $p \in (0.05,1)$, and number of agents $N \in \{15,20,\ldots, 45,50\}$. Because we are using the full range of $p$ values up to $1$, we only display results up to $N = 50$.}
    \label{fig::smaller_n_er}
\end{figure}

\begin{figure}[t]
    \centering
        \vspace{-1cm}
    \rotatebox{-90}{\includegraphics[width=0.95\textheight]{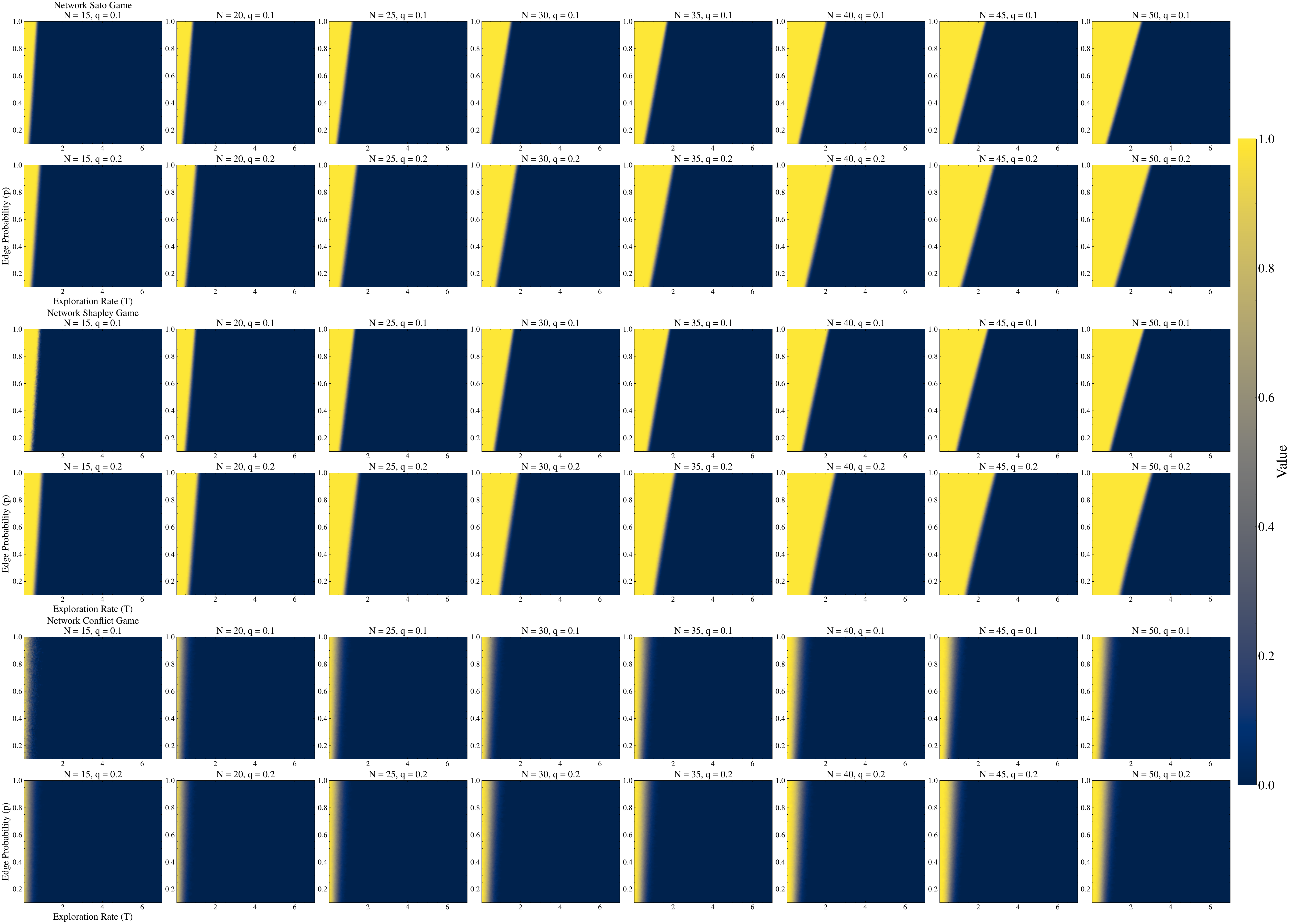}}
    \caption{Proportion of (\ref{eqn::QLD}) simulations that diverge in Network Sato, Shapley and Conflict games with networks drawn from the Stochastic Block model, varying exploration rates $T \in (0.05,7)$, intra-community edge probability $p \in (0.05,1)$, inter-community edge probability $q \in \{0.1,0.2\}$ and number of agents $N \in \{15,20,\ldots, 45, 50\}$. Because we are using the full range of $p$ values up to $1$, we only display results up to $N = 50$.}
    \label{fig::smaller_n_sbm}
\end{figure}

\begin{figure*}[t!]
    \vspace{-1cm}
     \centering
     \begin{minipage}[b]{\textwidth}
         \centering
         \includegraphics[width=0.95\textwidth]{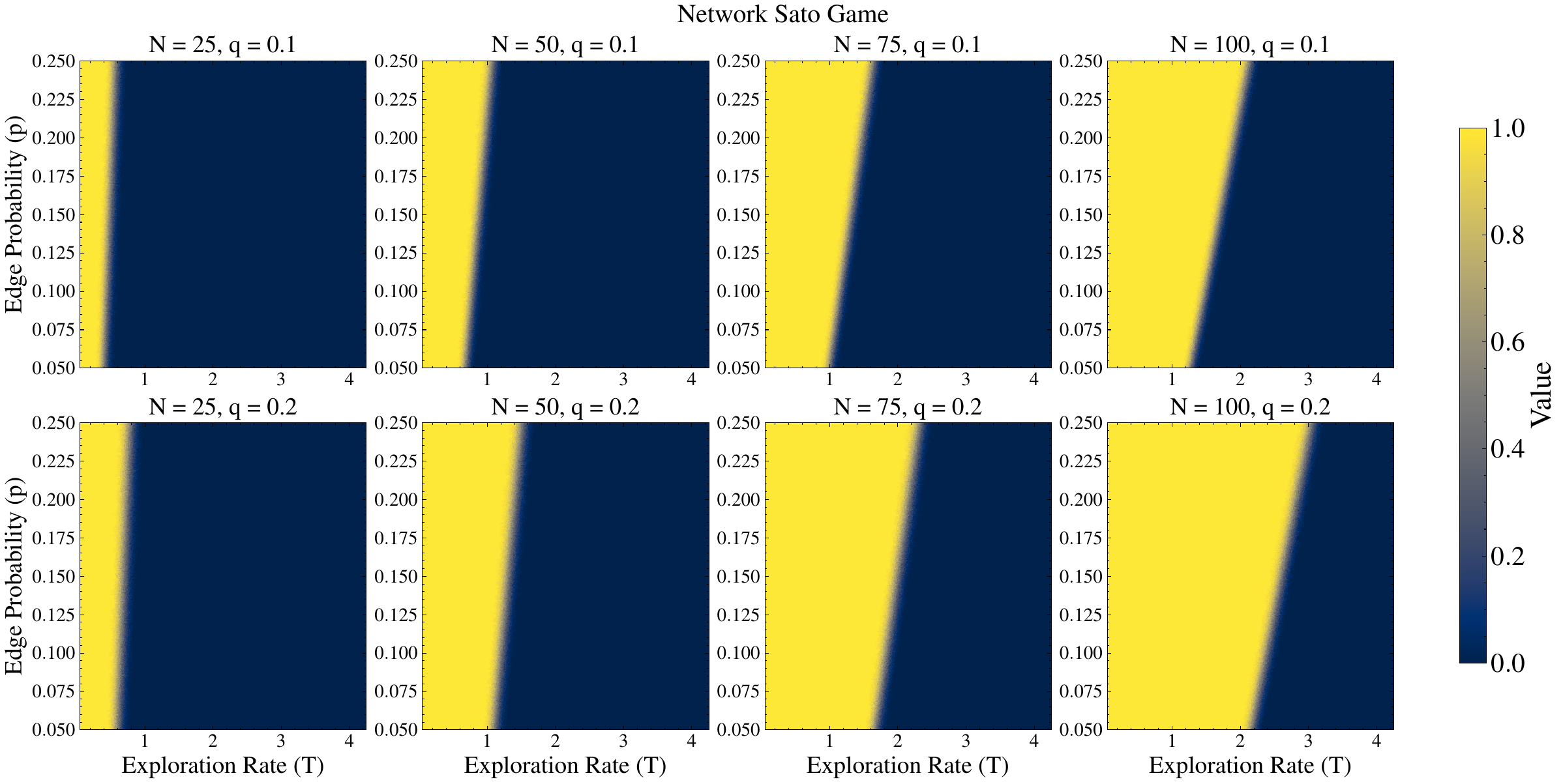}
     \end{minipage}
     \begin{minipage}[b]{\textwidth}
         \centering
         \includegraphics[width=0.95\textwidth]{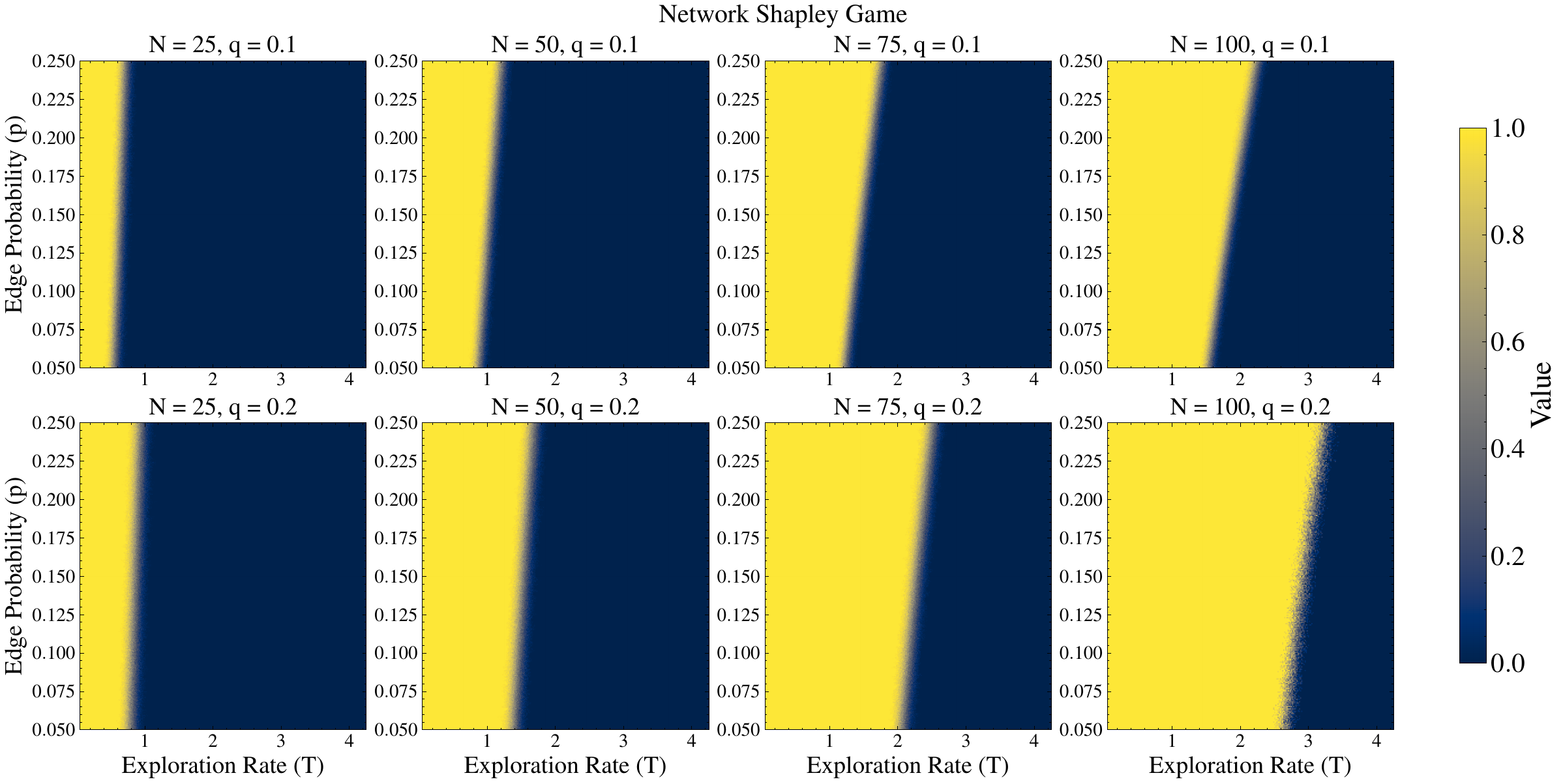}
     \end{minipage}
     \hfill
     \begin{minipage}[b]{\textwidth}
         \centering
         \includegraphics[width=0.95\textwidth]{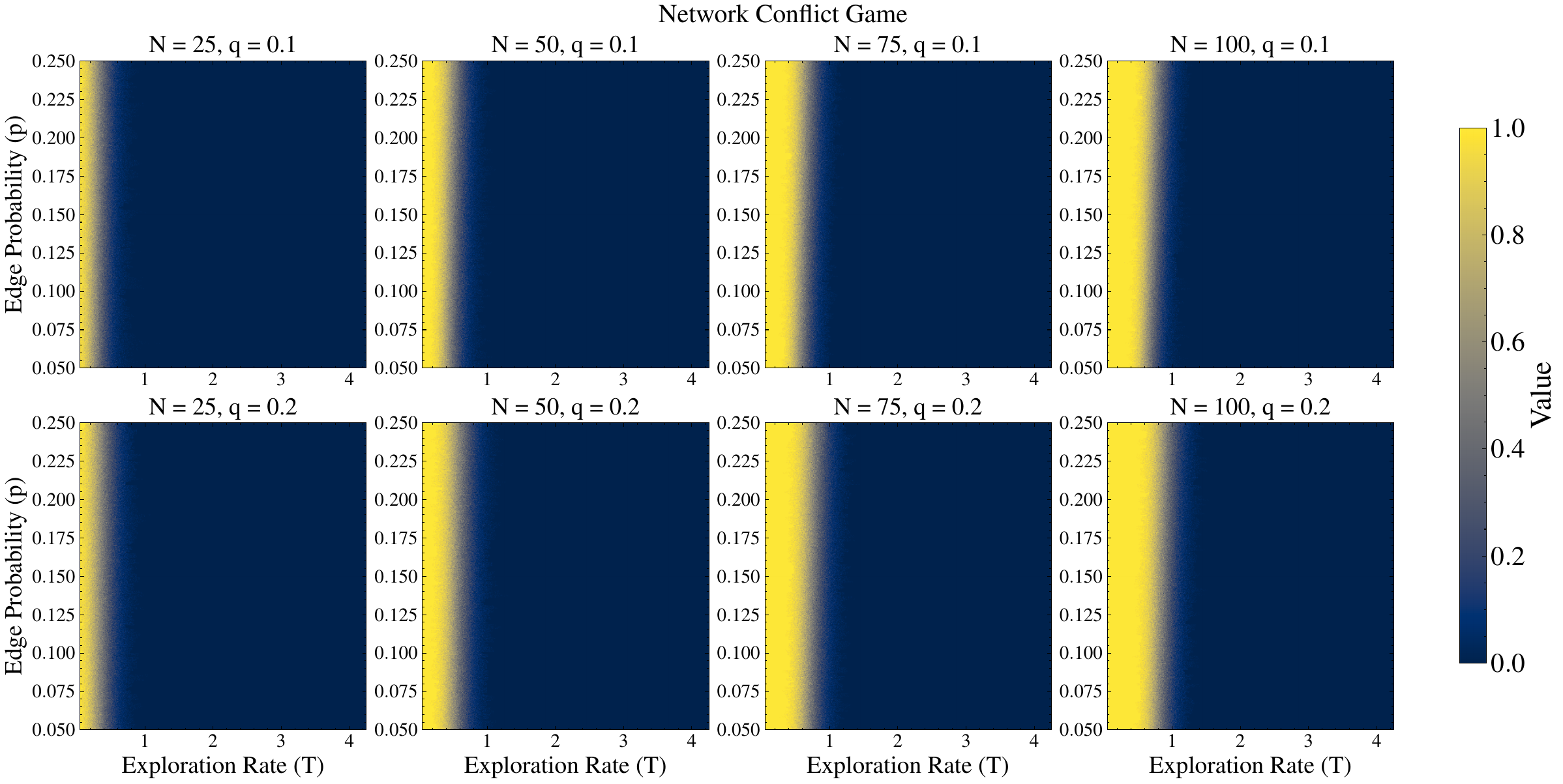}
     \end{minipage}
        \caption{Continuation of Figure \ref{fig::smaller_n_sbm} for larger number of agents $N$ and a restricted $p$ range of $(0.05,0.25)$, as used in the main body.}
        \label{fig::sapp-sbm-small-p}
\end{figure*}

\begin{figure}[t]
    \centering
    \includegraphics[width=\linewidth]{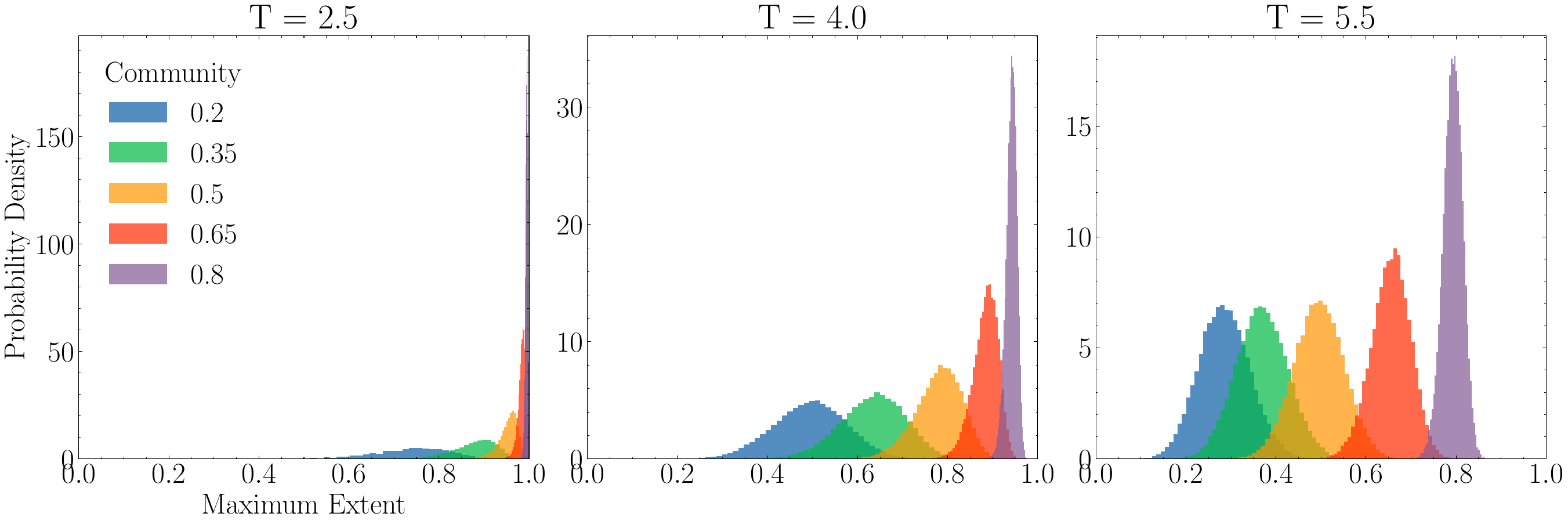}
    \caption{Probability density of final strategy variation in Network Sato games on heterogeneous stochastic block networks with $N=250$ agents, showing the maximum strategy variation across agents during the final 300 iterations, computed from 1024 independent simulations. Networks contain five communities with varying intra-community connection probabilities $p$ (shown in legend) and fixed inter-community probability $q=0.1$.}
    \label{fig::histogram_250}
\end{figure}
\clearpage
\newpage
\section{Q-Learning Dynamics}
\label{appendix:Q_learning_dynamics}
Our work is centered on independent (decentralised) multi-agent learning in normal-form games. In this setting we study the \emph{independent q-learning} algorithm \citep{leslie:iql}. Here, each agent independently maintains a Q-value of each action $i \in \clA_k$. Upon playing action $i$ at time step
$t$ and receiving its associated reward $r_{ki}$, the agent updates its Q-value to be a weighted sum
of its current estimation and the received reward. This is summarised in the update rule
\begin{equation}
    Q_{ki}(t + 1) = (1 - \alpha_k) Q_{ki}(t) + \alpha_k R_{ki}(t),
\end{equation}
where $\alpha_k \in [0, 1]$ is a step size and $R_{ki}(t)$ refers to the random realisation of the payoff at time $t$. Notice that this realisation depends on the actions of other agents in the environment, although these actions are unknown to agent $k$. 

Next, the agent updates their strategy according to their exploration policy using the \emph{Boltzmann exploration} scheme, so that
\begin{equation}
    x_{ki}(t) = \frac{\exp(Q_{ki}(t) / T_k)}{\sum_{j \in \clA{k}} \exp(Q_{kj}(t) / T_k)}
\end{equation}
By modifying the exploration rate $T_k$, the agent can smoothly move from high exploration, where
the Q-values have little influence, to high exploitation. In this manner, Q-Learning presents a strong model whereby the influence of exploration can be examined.

In \citet{sato:qlearning} and \citet{tuyls:faq}, the authors apply tools from evolutionary game theory (EGT) \citep{hofbauer:egd} to study multi-agent learning algorithms. To describe this approach, we first make a slight abuse of notation in which time $t$ will be regarded as a continuous time variable rather than discrete. Then, we can consider a learning algorithm to be a map from the joint strategy $\bfx(t)$ at time $t$ to $\bfx(t + \delta t)$ where $\delta t << 1$ is the time between consecutive updates. We will also take a deterministic approximation of the Q-update, by expressing it through the \emph{expected} reward to agent $k$, given the opponents' joint state $\bfx_{-k}(t)$. This yields the update
\begin{equation}
    Q_{ki}(t + 1) = (1 - \alpha_k) Q_{ki}(t) + \alpha_k r_{ki}(\bfx_{-k}(t)).
\end{equation}

If we take the limit as $\delta t \rightarrow 0$, we arrive at a continuous-time dynamical system that approximates the expected behaviour of the algorithm. As an example,
we depict in Figure \ref{fig::tuyls-example} traces of the Q-Learning update overlaid on its
continuous time approximation, which we introduce subsequently. The advantage of this approach is
that the tools of continuous dynamical systems can be used to prove properties of the idealisation,
which in turn tells us about the algorithm itself. For the full derivation of this dynamic from the Q-Learning update see Appendix A.1 of \citet{piliouras:zerosum} or Section 3.2 of \citet{tuyls:qlearning}.

\begin{figure}[t!]
	\centering
	\begin{minipage}[b]{0.45\textwidth}
		\includegraphics[width=\textwidth]{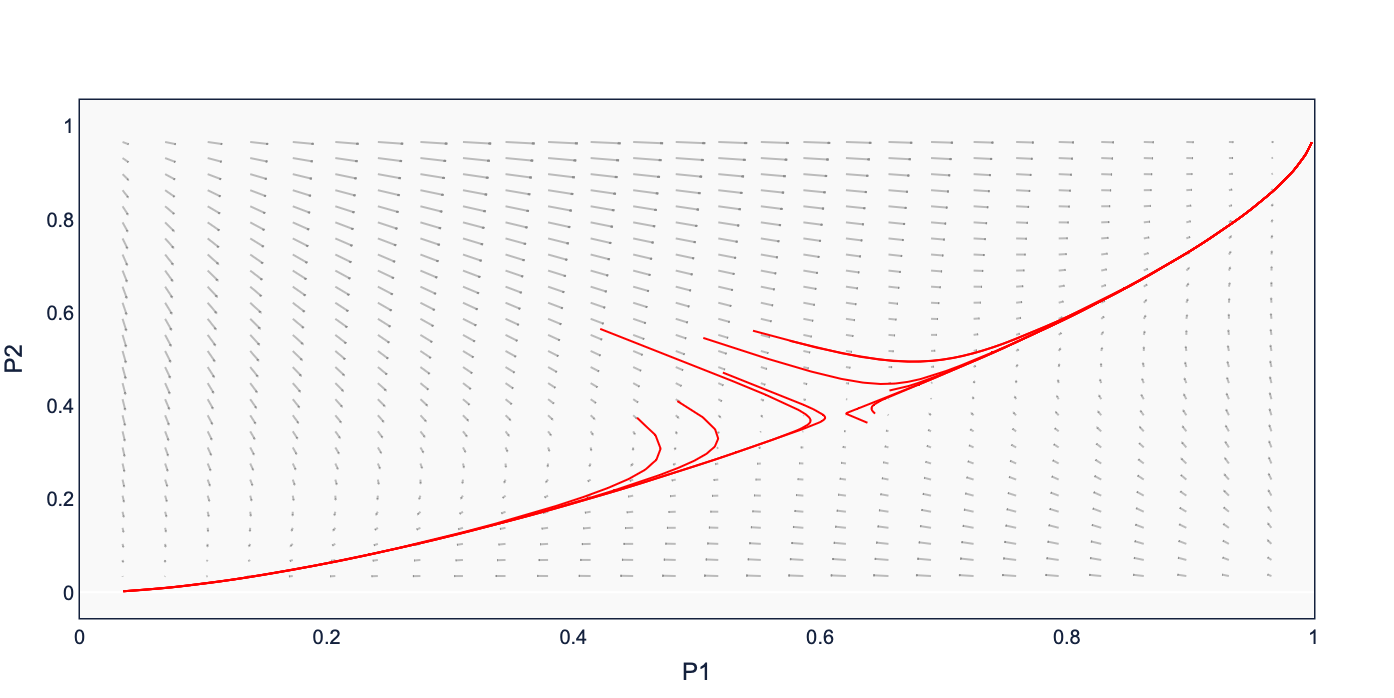}
	\end{minipage}
	\begin{minipage}[b]{0.45\textwidth}
		\includegraphics[width=\textwidth]{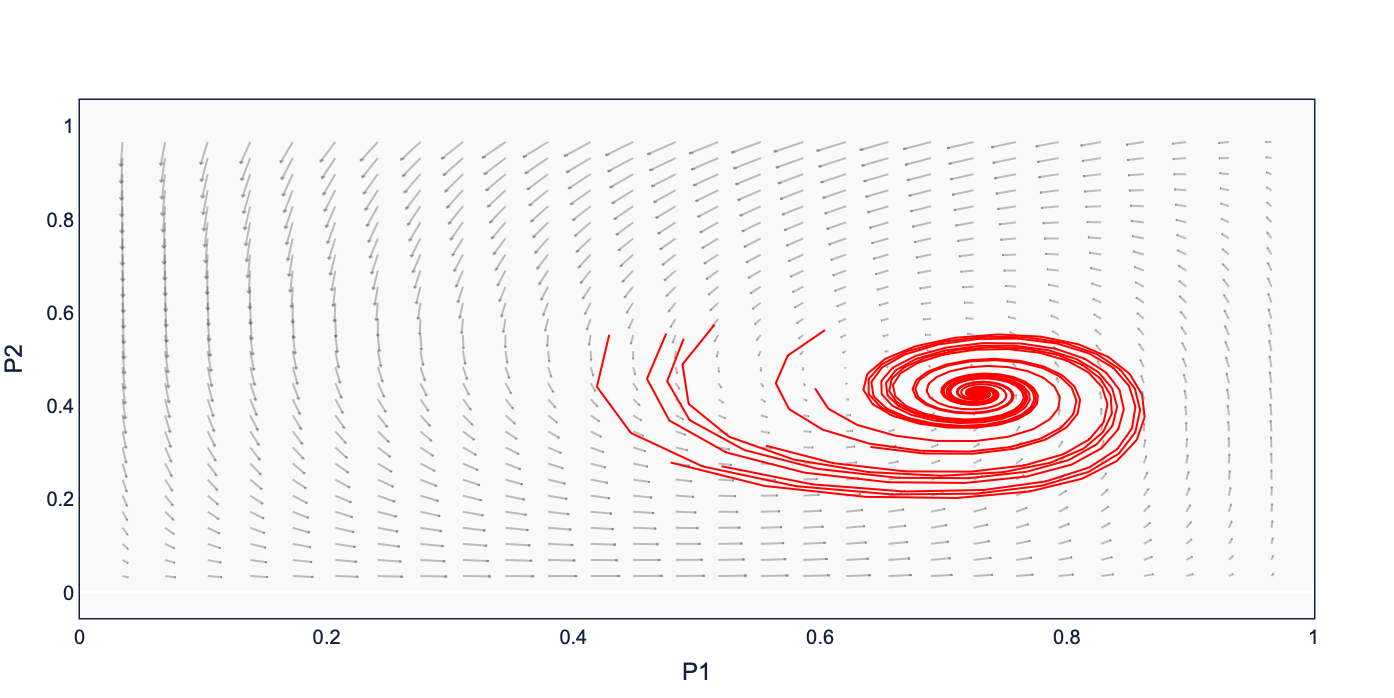}
	\end{minipage}
	\caption[Dynamics of Q-Learning]{Trajectories of the Q-Learning Algorithm (red) plotted on top of the phase portrait
    (black) generated by (\ref{eqn::QLD}) in two-player - two action games. The x-axis (resp.
    y-axis) denotes the probability with which the first (resp. second) player chooses their first
    action. In
    both cases, $T=0.3$. The payoff matrices for the left and right image are given in Figures 12 and 14 of \citet{tuyls:qlearning} respectively.} \label{fig::tuyls-example}
\end{figure}

\end{document}